\newtheorem*{rep@theorem}{\rep@title}
\newcommand{\newreptheorem}[2]{%
\newenvironment{rep#1}[1]{%
 \def\rep@title{#2 \ref{##1}}%
 \begin{rep@theorem}}%
 {\end{rep@theorem}}}
\newtheorem{corollary}{\bf Corollary}[section]
\theoremstyle{plain}
\newtheorem{thm}{Theorem}
\newtheorem*{thm*}{Theorem}
\newtheorem{lemma}[thm]{Lemma}
\newtheorem*{lemma*}{Lemma}
\newtheorem*{definition*}{Definition}
\newtheorem*{obs*}{Observation}
\theoremstyle{remark}
\newtheorem{rem}{Remark}
\newtheorem*{rem*}{Remark}
\DeclareMathOperator{\II}{II}
\begin{document}

%%%% Article title to be placed here
\title{On the incompressibility of cylindrical origami patterns}
\author{Friedrich Bös}
\email[]{f.boes@math.uni-goettingen.de}
%\homepage[]{Your web page}
%\thanks{}
%\altaffiliation{}
\affiliation{Institute for Numerics and Applied Mathematics, University of Göttingen}
\author{Etienne Vouga}
\affiliation{Department of Computer Science, The University of Texas at Austin, Austin, Texas 78712, USA}
\author{Omer Gottesman}
\affiliation{School of Engineering and Applied Sciences, Harvard University, Cambridge, Massachusetts 02138, USA}
\author{Max Wardetzky}
\affiliation{Institute for Numerics and Applied Mathematics, University of Göttingen}
%\author{Friedrich Bös$^1$, Etienne Vouga$^2$, Omer Gottesman$^3$, Max Wardetzky$^1$}

%%%%%%%%% Insert author address here
%\address{$^{1}$Institute for Numerics and Applied Mathematics, University of Göttingen, 37083 Göttingen, Germany\\
%$^{2}$Department of Computer Science, The University of Texas at Austin, Austin, Texas 78712, USA\\
%$^{3}$School of Engineering and Applied Sciences, Harvard University, Cambridge, Massachusetts 02138, USA}

%%%% Subject entries to be placed here %%%%
%\subject{xxxx, xxxx, xxxx}

%%%% Keyword entries to be placed here %%%%
%\keywords{xxxx, xxxx, xxxx}

%%%% Insert corresponding author and its email address}
%\corres{Friedrich Bös\\
%\email{f.boes@math.uni-goettingen.de}}

%%%% Abstract text to be placed here %%%%%%%%%%%%
\begin{abstract}
  The art and science of folding intricate three-dimensional structures out of paper has occupied artists, designers, engineers, and mathematicians for decades, culminating in the design of deployable structures and mechanical metamaterials. Here we investigate the axial compressibility of origami cylinders, i.e., cylindrical structures folded from rectangular sheets of paper. We prove, using geometric arguments, that a general fold pattern only allows for a finite number of \emph{isometric} cylindrical embeddings.
  Therefore, compressibility of such structures requires either stretching the material or deforming the folds. Our result
  considerably restricts the space of constructions that must be searched when designing new types of origami-based rigid-foldable deployable structures and metamaterials.
\end{abstract}
%%%%%%%%%%%%%%%%%%%%%%%%%%%

\maketitle
%\begin{fmtext}
\section{Introduction}
\label{sec:intro}
Consider the experiment of crushing a soda can: axially compressing a thin cylinder. It has long been known that (i) such compression results in diamond-shaped (Yoshimura) crease patterns
~\cite{Yoshimura,Coppa1966} and (ii) crushing an ideal cylinder unavoidably induces in-plane stress in the cylinder. However, various questions about the global rigidity and \emph{flexibility} of such thin-walled cylinders and their idealized counterparts, the rigid origami cylinders, still remain poorly understood despite the fact that considerable effort has been put into the study of planar collapsible, or rigid-foldable, origami~\cite{Tachi2009}.
Furthermore, there exist numerous examples of origami cylinders that \emph{appear} to be truly rigidly foldable---deformable
purely isometrically, without any in-plane strain, some examples of which are shown in Figure~\ref{fig:teaser} and the supplemental video.
But is this really so?
%\end{fmtext}

%%%%%%%%%%%%%%% End of first page %%%%%%%%%%%%%%%%%%%%%

\begin{figure}[H]
  \centering
  \includegraphics*[width=\textwidth]{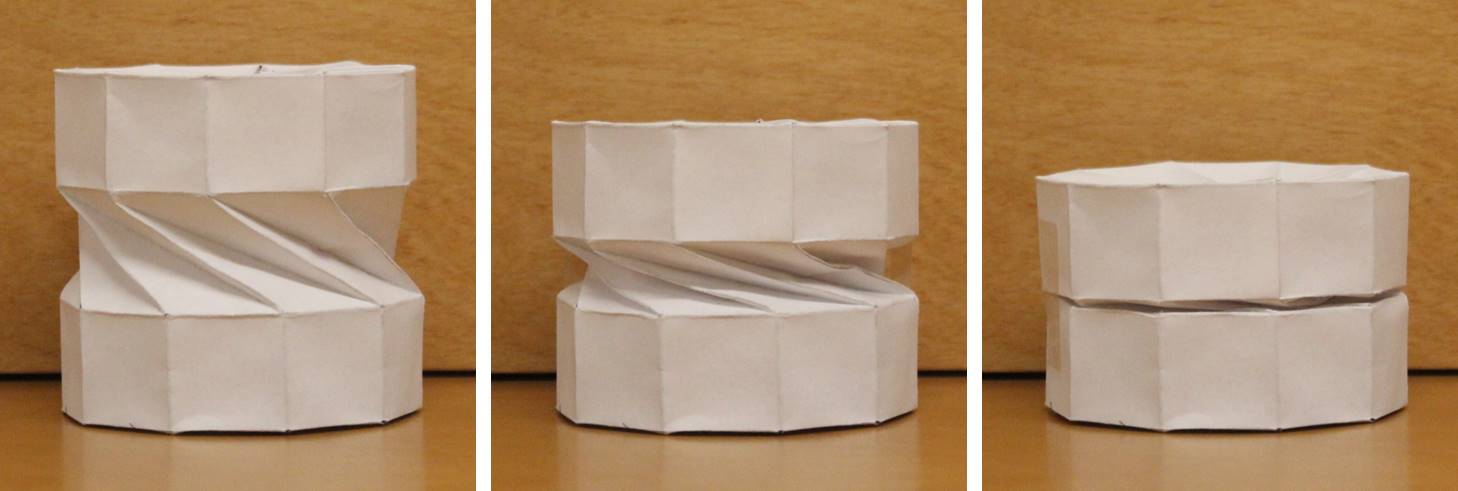}
  \caption{Compressing an origami cylinder made from ordinary paper suggests that it can rigidly collapse. But is this indeed so?}
   \label{fig:teaser}
\end{figure}

The study of rigid foldability for \emph{closed} surfaces, i.e., polyhedral surfaces without holes, has a long tradition in mathematics. For instance, Cauchy's theorem~\cite{cauchyrecherche,Connelly1979} asserts that the boundary surface of a convex polyhedron is always rigid; moreover, the famous \emph{bellows theorem}~\cite{Connelly1998} states that a flexible closed surface must maintain its enclosed volume during isometric deformation. Applied to the problem of foldable origami cylinders, the bellows theorem rules out rigid foldability \emph{provided} that the cylinder's top and bottom do not alter shape when being deformed (since then the cylinder could be sealed with end caps to form a closed surface).

On the other hand, the work of Guest and Pellegrino~\cite{Guest1994a} informs us that purely isometric deformations \emph{are} indeed possible \emph{if} one (theoretically) allowed for sliding of folds within the material, which, however, cannot be considered origami foldability. Only recently, Yasuda and Yang~\cite{Yasuda2015}, building on the work of Tachi~\cite{Tachi2009}, have constructed fold patterns for origami cylinders that allow for true rigid foldability \emph{without} moving folds within the material. In this paper, we show that
the pattern obtained by Yasuda and Yang is very special in its ability to compress origami cylinders isometrically (i.e., without introducing in-plane strain): we show that fold patterns that do not include vertical fold lines are incompressible.

The key to our argument is to formulate the problem of rigid foldability of origami cylinders as a root finding problem for a certain \emph{real analytic} function (to be specified below). We show that this function has at least one nonzero value, implying that it can only have isolated zeroes, which rules out continuous isometric deformations of origami cylinders. Our result extends the bellows theorem by proving that it is impossible to crush a large class of origami cylinders (e.g., those shown in Figure~\ref{fig:nonbellows}) without the restricting assumption that the cylinder's top and bottom maintain their shape during deformation.

\section{When are origami cylinders compressible?}
\label{sec:cans}
We formalise our framework by defining a \emph{fold pattern} in the $(uv)$-plane as a finite set of straight \emph{fold lines} drawn on a rectangular domain of dimensions $l \times h$. See for instance the top row of Figure~\ref{fig:schwarzcirc} for an example.

An \emph{origami cylinder} is made from such a fold pattern by gluing together its left and right boundaries and folding the paper along each fold line. More precisely, origami cylinders are isometric embeddings of fold patterns into 3D $(xyz)$-space, such that (i) fold lines are mapped to straight line segments, (ii) the embedding is piecewise four times continuously differentiable\footnote{For details about this assumption we refer to the appendix.
}, and (iii) all horizontal lines in the material domain map to closed curves lying in planes perpendicular to the $z$ axis. This latter assumption is motivated by the prevalence of rotational symmetry among apparently-compressible origami cylinders, such as the one in Figure~\ref{fig:teaser}. The second assumption implies (as shown in the appendix) that there are no folds besides those that are explicitly defined by the fold lines. 

Notice that these assumptions include fold patterns such as those considered by Yasuda and Yang in~\cite{Yasuda2015}, which are known to be isometrically compressible. The central result of the present work is that isometric compressibility necessarily requires \emph{vertical} fold lines, i.e., fold lines that are perpendicular to the upper and lower boundary of the rectangular domain. Indeed, the absence of vertical fold lines necessarily leads to origami cylinders that are \emph{not} isometrically deformable:

\begin{thm}
  An origami cylinder whose fold pattern does not contain vertical fold lines has at most a finite number of heights at which it can be isometrically embedded. Hence, the absence of vertical folds prohibits isometric deformations with continuously varying heights.

  \label{thm:compressibility}
\end{thm}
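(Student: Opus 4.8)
\emph{Overall plan.} Write $\Phi$ for the embedding, $\gamma_v:=\Phi(\cdot,v)$ for the image of the horizontal material line at height $v$, and $z(v)$ for its common $z$-coordinate; since $\Phi$ is a piecewise-$C^4$ isometry, $z$ is well defined, piecewise $C^4$, and $C^1$ across any horizontal level not entirely covered by fold lines. The plan is to proceed in three stages: (1) establish a pointwise normal form for $\partial_v\Phi$; (2) use it, together with straightness of the fold segments, to show the height profile $z(\cdot)$ is piecewise affine with a number of pieces bounded by the pattern; (3) show the remaining freedom is governed by a real-analytic function that -- precisely because there are no vertical folds -- is not identically zero, so its zero set, and hence the set of achievable heights, is finite.

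\emph{Local normal form.} Off the fold lines $\partial_u\Phi$ and $\partial_v\Phi$ are orthonormal (isometry), $\partial_u\Phi$ is tangent to the planar curve $\gamma_v$, and $\partial_v\Phi$ has $z$-component $z'(v)$, so $\partial_v\Phi = z'(v)\,\hat z\pm w(v)N_v$ with $w(v):=\sqrt{1-z'(v)^2}$ and $N_v$ the in-plane normal of $\gamma_v$. Consequences: $|z'|\le 1$, so every achievable height lies in the compact interval $[0,h]$; each panel between fold lines is intrinsically flat, hence developable; and as $v$ increases $\gamma_v$ evolves by a \emph{normal offset flow of spatially uniform speed} $w(v)$, except at the finitely many points where $\gamma_v$ crosses a fold line, where it has a corner pinned to the corresponding straight spatial segment.

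\emph{Rigidity of the height profile.} Because a fold line maps to a straight segment, $z$ varies affinely in arclength along it; transported through the isometry this forces $z$ to be affine in $v$ on the $v$-interval spanned by any non-horizontal fold line. On a $v$-interval meeting no fold line, $\gamma_v$ is a smooth embedded closed curve with turning $\pm2\pi\neq0$, and a uniform normal offset changes $\mathrm{length}(\gamma_v)$ by (offset distance)$\times$(turning), which must vanish since $\mathrm{length}(\gamma_v)\equiv l$; hence $w\equiv0$ and $z$ is affine of slope $\pm1$ there. Gluing these affine pieces with the $C^1$-matching of $z$ at every level not wholly covered by fold lines -- only finitely many levels are -- shows $z$ is continuous and piecewise affine with a pattern-bounded number of pieces, so the height $\max_v z-\min_v z$ is a fixed piecewise-linear function of the slopes $s_1,\dots,s_k$ of the ``fold-covered'' pieces (the remaining slopes being $\pm1$).

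\emph{The analytic obstruction, and the hard part.} It then remains to see that only finitely many slope tuples occur. Fixing the pattern, an admissible configuration is pinned down by finitely many real variables (the $s_i$, the dihedral angles at folds, the cross-section shape parameters) subject to a finite system of real-analytic equations: straightness of the fold segments, orthonormality of $\partial_u\Phi,\partial_v\Phi$, horizontality and closure of each $\gamma_v$, and the dihedral loop-closure relations around each interior vertex of the pattern. Hence admissible configurations form a real-analytic set and the achievable heights are its image under an analytic map, so finiteness follows once one exhibits a height that cannot be realised, i.e.\ shows the associated residual $F$ is not identically zero. This is where the hypothesis is used, and it is the main obstacle. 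The idea: if heights varied continuously some cross-section would genuinely offset-flow ($w(v)>0$ on a $v$-interval); length preservation of $\gamma_v$ then forces an identity tying $w(v)$, the tangent jumps $T_j^--T_j^+$ at the corners of $\gamma_v$, and the corner velocities $\dot c_j$. But every corner lies on a \emph{non-vertical} fold segment, so $\dot c_j$ is slaved to $z'(v)$, with its horizontal part parallel to the fold segment's horizontal part, which -- since the material angle between a fold line and a horizontal line is preserved by the isometry -- is orthogonal to $T_j^--T_j^+$; so the corners contribute nothing and the identity collapses to $w(v)\cdot(\text{turning of the smooth arcs of }\gamma_v)=0$, which for an embedded $\gamma_v$ is incompatible with $w(v)>0$. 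A vertical fold is exactly the loophole: along it $\dot c_j$ is purely vertical, decoupling the corner's hinging from the height and letting the (star-shaped) cross-section breathe, as in the Yasuda--Yang pattern. The delicate steps in making this rigorous are: treating the polygonal, conical and tangent-developable cross-sections uniformly; handling the degenerate levels (horizontal fold lines, fold segments that happen to be horizontal in space, tangencies of $\gamma_v$ to a fold line, non-monotone $z$); and the bookkeeping in the length-preservation identity. Verifying real-analyticity of the constraint system and upgrading ``isolated zeros'' to ``finitely many'' on the compact interval $[0,h]$ should then be routine.
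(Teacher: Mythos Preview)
Your proposal takes a route quite different from the paper's, and it contains two genuine gaps. First, the analyticity-to-finiteness step does not work as you state it. You package the admissible configurations as a real-analytic variety in finitely many variables (slopes $s_i$, dihedral angles, ``cross-section shape parameters'') and then claim that exhibiting one unattainable height makes the height image finite. But the image of a real-analytic set under a coordinate projection can perfectly well be a nontrivial interval; a single analytic equation in several variables cuts out a codimension-one set, not a finite one. What the paper does, and what you are missing, is a reduction to \emph{one} real variable: it uses the flatness theorem (Theorem~\ref{thm:flatness}) to make cross-sections polygonal, then the fan analysis (Lemmas on $\varphi_i$ and Eq.~\eqref{eq:alpha}) to show that at each height $H$ there are only \emph{finitely many discrete} choices of turning-angle signs, and that for each fixed choice all boundary turning angles---hence the gap $g$---are real-analytic functions of $H$ alone. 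Only after this combinatorial discretisation does ``analytic and not identically zero $\Rightarrow$ finitely many zeros'' apply.

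Second, your mechanism for showing the residual is not identically zero is broken. The length-preservation identity you write down is a $\tfrac{d}{dv}$ computation within a \emph{single} embedding, not a $\tfrac{d}{dH}$ computation along the putative family; if it produced a contradiction it would forbid every embedding with $H<h$, which is plainly false. And in the case that actually occurs---once facets are flat, the cross-sections $\gamma_v$ are polygons---the ``smooth arcs'' are straight segments with zero turning, so your collapsed identity $w(v)\cdot(\text{turning of smooth arcs})=0$ reads $w(v)\cdot 0=0$ and yields nothing. The paper's device for this step is much simpler and uses the no-vertical-folds hypothesis in a transparent way: evaluate the gap function at the limiting height $H\to h$. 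With no vertical folds, every fold is forced flat in that limit and the only open embedding of the strip is the rigid $l\times h$ rectangle, whence $g(h)=l\neq 0$. That single nonzero value, together with the one-variable analyticity obtained above, gives the finiteness.
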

An additional consequence of our analysis poses constraints on fold patterns that contain vertical fold lines. We show that the corresponding folds in the embedded cylinder are either flat or must have dihedral angle $\pm\pi$ and thus form sharp creases---a striking feature of the collapsible origami cylinders discovered by Yasuda and Yang~\cite{Yasuda2015}.
\begin{figure}[t]
  \centering
  \newcommand{\factor}{0.493}
  \vspace{0pt}
  \begin{minipage}[t]{\factor\linewidth}
	\centering
	\vspace{0pt}
	\includegraphics[width=\linewidth]{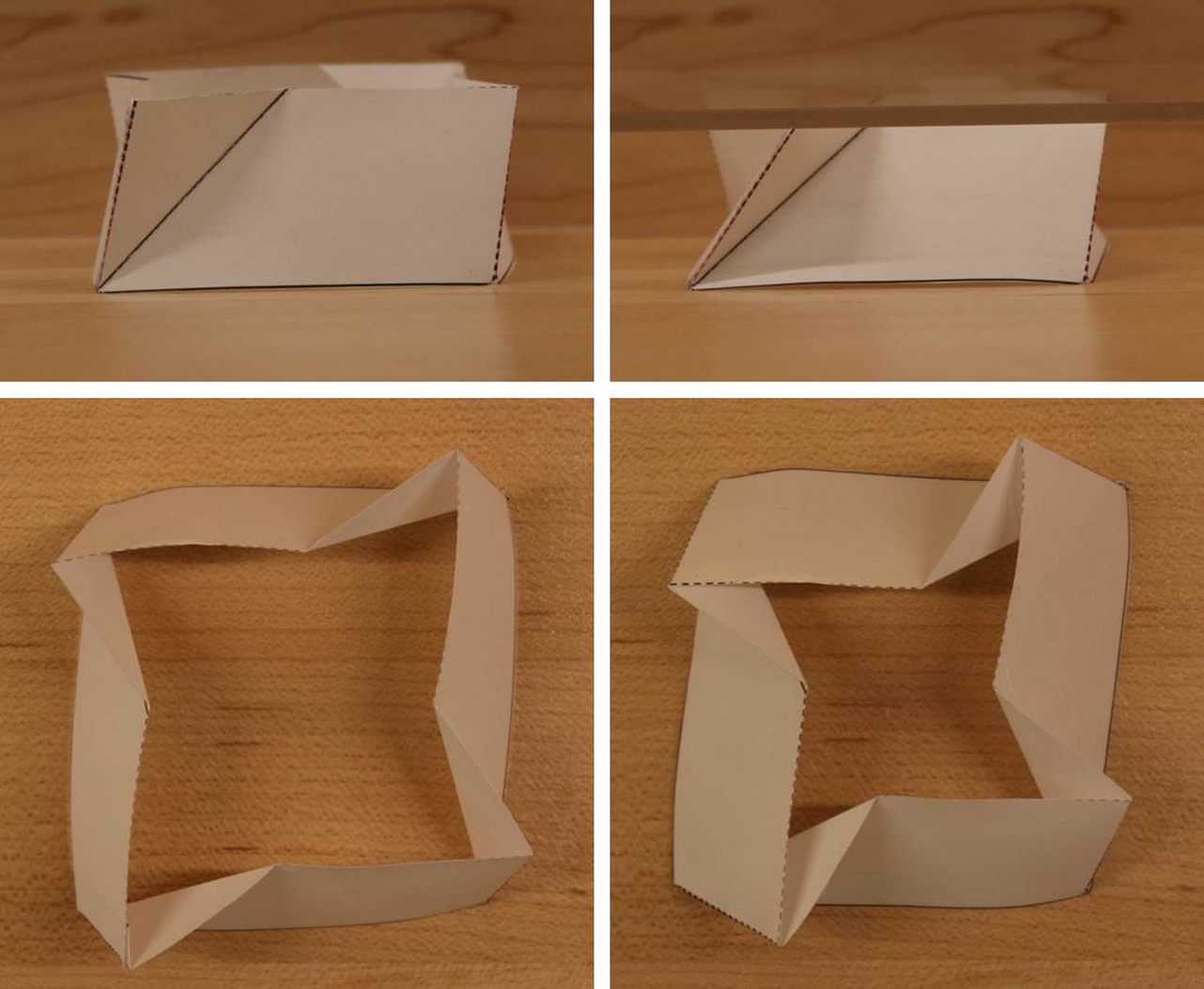}
	\caption{An origami cylinder to which the bellows theorem does not apply. Its apparent collapsibility is disproven by our result.}
	\label{fig:nonbellows}
  \end{minipage}
  \hfill
  \vspace{0pt}
  \begin{minipage}[t]{\factor\linewidth}
	\vspace{0pt}
	\centering
	\includegraphics[width=\linewidth]{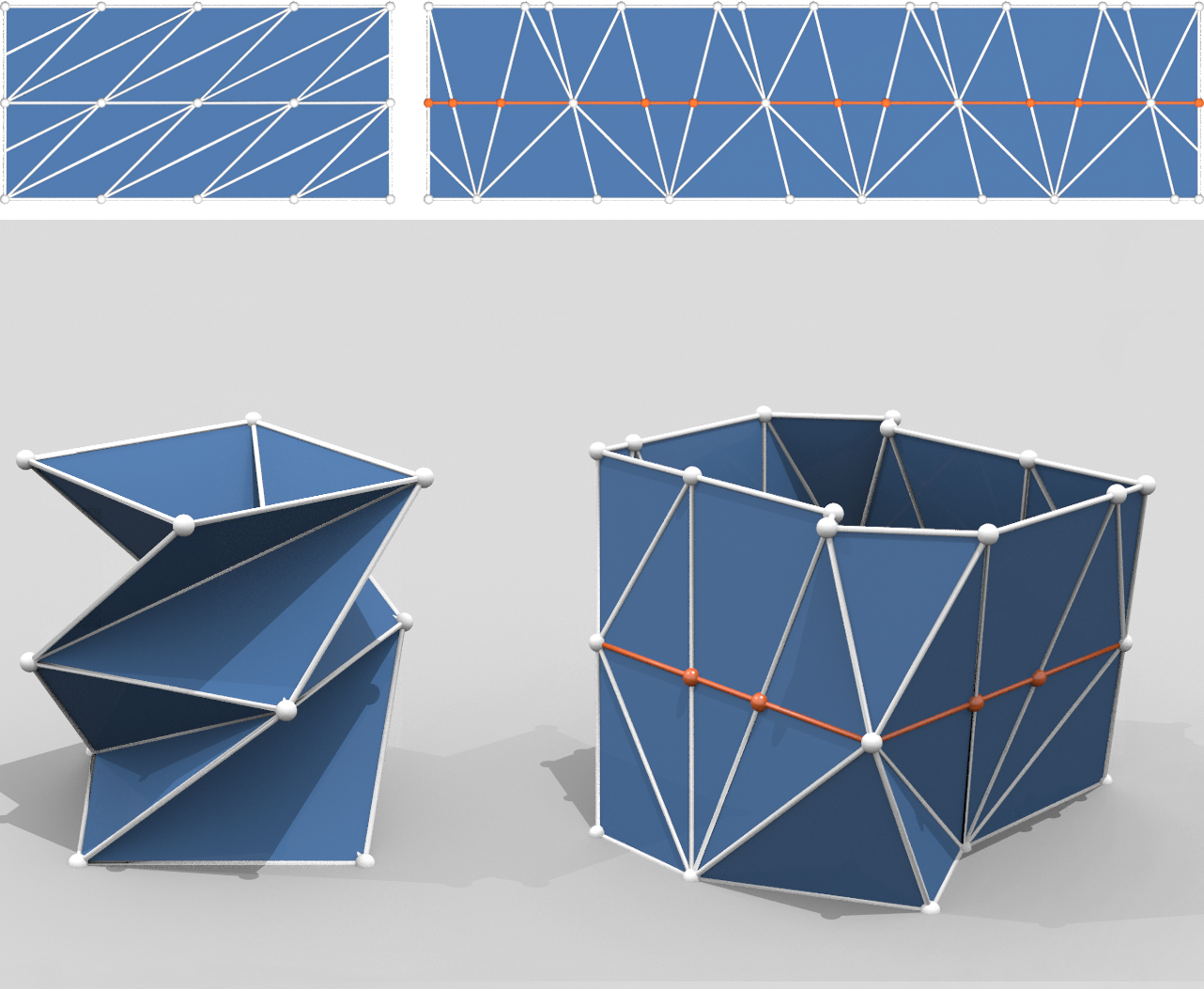}
	\caption{Two origami cylinders with regular (left) and irregular (right) fold patterns (white lines) and illustration of strip construction (orange lines).}
	\label{fig:schwarzcirc}
  \end{minipage}
\end{figure}

The remainder of this work is dedicated to proving these results. Our proof relies on three key insights.
First, a candidate fold pattern can be split into a stack of horizontal
\emph{strips}
that contain no fold lines that intersect each other in the strip's interior (see Figure~\ref{fig:fanexample} and the top row of Figure~\ref{fig:schwarzcirc}). 
\begin{figure}[!t]
	\centering
	\newcommand{\factor}{0.49}
	\includegraphics[width=\linewidth]{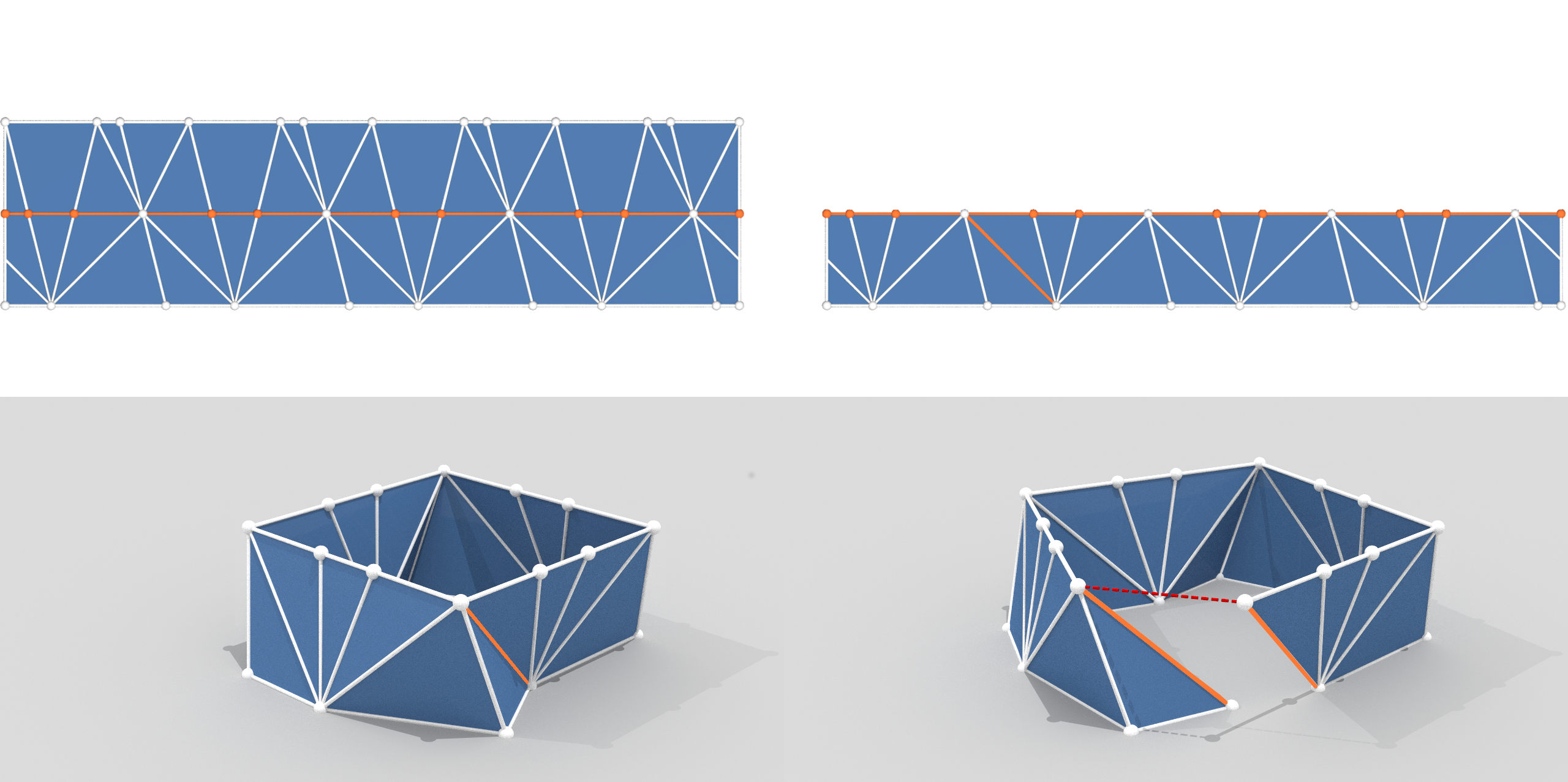}
	\caption{Top row: a fold pattern is split into strips, with the bottom one in detail. Bottom row: The lowermost strip is cut open along the fold marked in orange. For different heights, the embedded strip may be closed (left) or open (right). For open strips, we measure the \emph{gap} (the length of the dashed red line) of either boundary.}
	\label{fig:fanexample}
\end{figure}
Compressibility of the entire fold pattern can then be reduced to compressibility of each of the pattern's strips.

Secondly, we investigate the problem of embedding a single strip as an origami cylinder with prescribed height in 3D space. We show that such an embedding is not always possible for every prescribed height. However, it \emph{is} possible to embed any strip for every prescribed height \emph{if} one drops the assumption that the embedded strip must close up into a cylinder, while maintaining the remainder of assumptions (i)--(iii) above. Furthermore, we prove that the number of embeddings of such open strips with prescribed height is finite. 

Finally, we measure for each embedding of an open strip the failure to close up into a cylinder. This is done by evaluating the distance between corresponding points on both sides of the cut. We define two \emph{gap functions} measuring this distance at the upper and lower boundary of the strip (see Figure~\ref{fig:fanexample}).
Clearly, an embedding closes up into a cylinder if and only if both gap functions vanish. We prove that for any continuous vertical compression of an open strip, the upper and lower gap functions are  \emph{analytic} functions of the embedded height.
As we show, these functions cannot be identically zero in the absence of vertical folds. Far from being isometrically compressible, a given fold pattern can thus be embedded at only a discrete set of heights as an origami cylinder.

\section{Embedding a single strip}
\label{sec:construction}
In order to analyse a strip's gap functions, we first show that every strip's boundary is a planar polygon. We then show how to compute the turning angles of this polygon as an \emph{analytic} function of the prescribed height of the embedded strip. Once these angles are known, the strip's upper and lower gap functions can directly be read off since edge lengths of each boundary segment are known a priori due to the requirement of isometry.

The fact that every strip's upper and lower boundaries are planar polygons is a consequence of a more general result that is of interest on its own. Starting with a fold pattern, from which an origami cylinder is constructed, notice that our assumptions (i)--(iii) above only require straight embedded fold lines but do not explicitly impose conditions on the surface in between. Yet the following holds:
\begin{thm}
  \label{thm:flatness}
  Every origami cylinder whose embedded height is strictly less than the material height consists entirely of planar faces that meet at the embedded fold lines. In particular, there are no creases in the embedding apart from those given by the fold lines.
\end{thm}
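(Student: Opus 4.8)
\emph{Setup and the key identity.}
The plan is to use assumption~(iii) to reduce the geometry to plane curves, extract one pointwise consequence of isometry, and then promote it to planarity of each face. By~(iii) every level set $\{v=\mathrm{const}\}$ is sent into a plane $\{z=\mathrm{const}\}$, so the last coordinate of the embedding $f$ depends on $v$ alone: write $f(u,v)=(\gamma(u,v),z(v))$ with $\gamma$ valued in the horizontal $\R^2$, where $u$ is the glued width parameter and $v$ the height parameter. Isometry makes the first fundamental form $du^2+dv^2$, i.e.\ $|\gamma_u|^2=1$, $\langle\gamma_u,\gamma_v\rangle=0$, and $|\gamma_v|^2+z'(v)^2=1$. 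Fix a face $F$ (a connected region between fold lines, on which $f$ and hence $\gamma$ is $C^4$, by the consequence of~(ii) recorded above). Differentiating: $\partial_v|\gamma_u|^2=0$ gives $\langle\gamma_{uv},\gamma_u\rangle=0$, and then $\partial_u\langle\gamma_u,\gamma_v\rangle=0$ gives $\langle\gamma_{uu},\gamma_v\rangle=0$. In the plane both $\gamma_{uu}$ and $\gamma_v$ are orthogonal to $\gamma_u$, hence multiples of its $90^\circ$-rotation $\gamma_u^{\perp}$; writing $\gamma_{uu}=\kappa\,\gamma_u^{\perp}$ ($\kappa$ the signed curvature of the level curve $u\mapsto\gamma(u,v)$) and $\gamma_v=\sigma\,\gamma_u^{\perp}$ (so $\sigma^2=1-z'(v)^2$), the relation becomes $\kappa\,\sigma\equiv 0$ on $F$.

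\emph{Level curves are polygons.}
Wherever $z'(v)^2<1$ we have $\sigma\neq 0$, hence $\kappa\equiv 0$, so the level curve at that height has zero curvature on each face and is therefore a closed polygon whose vertices lie only on images of fold lines (the finiteness of the fold set bounds the vertex number). I then propagate this to all heights. The hypothesis that the embedded height is strictly less than the material height forces $z'$ not to be identically $\pm1$ (otherwise $z$ is affine with unit slope and the embedded height equals $h$), so $B:=\{v:z'(v)^2<1\}$ is nonempty and open. The set of heights with polygonal level curve is closed — a limit of closed polygons with a uniform bound on the number of vertices is again one — so it contains $\overline B$; and on each component of the interior of $\{v:z'(v)^2=1\}$ one has $\gamma_v\equiv 0$, so the level curve there agrees, by continuity, with the level curve at a boundary height of that component, which lies in $\overline B$. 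Hence \emph{every} level curve is a closed polygon with vertices only on fold lines.

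\emph{Planarity of faces.}
On a face $F$ the level curves are straight segments, so $\gamma_u(u,v)=e(v)$ is independent of $u$; substituting $\gamma(u,v)=p(v)+u\,e(v)$ into $|\gamma_v|^2=1-z'^2$, a quadratic in $u$ that must be $u$-constant, forces $e'\equiv 0$, so $e$ is a fixed unit vector, and then $\langle\gamma_u,\gamma_v\rangle=0$ gives $p'(v)=\beta'(v)\,e^{\perp}$ with $\beta'^2+z'^2=1$. Thus in the orthonormal frame $\{(e,0),(e^{\perp},0),(0,0,1)\}$ the face is parametrised by $(u,v)\mapsto(u,\beta(v),z(v))$, a generalized cylinder with horizontal rulings over the planar unit-speed profile $v\mapsto(\beta(v),z(v))$. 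Finally, $F$ is bounded by at least one non-horizontal fold line — otherwise its strip is a single face whose level curves are single straight segments, impossible for closed curves — and by~(i) the image of that fold line, the image of a material segment $\{(u_0(v),v)\}$ with $u_0$ affine in $v$, is a straight segment in space; a short check (vertical fold line: the image lies in a plane, so it is straight iff the profile is; slanted fold line: $\mathbf r''\parallel\mathbf r'$ with affine first coordinate forces $\beta''=z''=0$) shows this is possible only if $(\beta(v),z(v))$ is itself a straight segment. Hence $F$ maps into a plane; the ``no extra creases'' clause is the consequence of~(ii) already recorded, and ``meet at the fold lines'' follows since distinct faces can only abut there.

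\emph{Main obstacle.}
The delicate step is the propagation in the second paragraph: the pointwise identity $\kappa\sigma\equiv0$ yields flatness directly only on $B=\{z'^2<1\}$, so one must exclude ``vertical-cylinder'' pieces over heights with $z'^2=1$ that are insulated from $B$ — this is precisely where the meaning of the height hypothesis, the continuity of $z'$, and the finiteness of the fold set all come into play. Once all level curves are known to be polygons, the remaining steps are elementary.
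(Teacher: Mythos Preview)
Your argument is correct, and it is a genuinely different proof from the one in the paper. The paper treats each facet as an abstract developable $C^4$ surface: for \emph{internal} facets it invokes Massey's results on asymptotic lines of developable surfaces (parabolic points lie on straight rulings that cannot hit flat points) to force flatness; for \emph{boundary} facets it first proves a reconstruction theorem for ruled developables from a geodesic directrix, applies it to the horizontal boundary curve, and deduces that if the boundary were curved the rulings would be vertical and would traverse the cylinder, contradicting $H<h$. Nowhere does the paper exploit assumption~(iii) beyond knowing that the boundary is planar.

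You instead squeeze assumption~(iii) for everything it is worth: writing $f=(\gamma(u,v),z(v))$ and reading off $\kappa\sigma=0$ from the first fundamental form is far more elementary than the asymptotic-line machinery, and it handles internal and boundary facets uniformly. Your final step---forcing the profile $(\beta,z)$ to be affine from the straightness of one bounding fold---is also neat and avoids the paper's separate ruled-surface reconstruction theorem. A noteworthy dividend: your computation only uses second derivatives of $\gamma$, so the argument goes through for piecewise $C^2$ embeddings; the paper explicitly remarks that it knows no proof below $C^4$. On the other hand, the paper's internal-facet lemma is a self-contained statement about isometric embeddings of planar polygons that does not rely on assumption~(iii), so it is reusable in contexts where horizontal slicing is not available. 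Two minor points worth tightening in your write-up: (a) the global continuity of $z'$ (needed to rule out $z'$ flipping sign while $|z'|\equiv1$) deserves one sentence---observe that for every $v_0$ some face contains an open rectangle around a point at height $v_0$, so $z$ is in fact globally $C^2$; (b) the closing remark that ``no extra creases'' is ``the consequence of~(ii) already recorded'' is misleading---it is a consequence of the planarity you have just established, not of~(ii) alone.
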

Notice that the requirement that the embedded height $H$ is strictly less than the material height $h$ is essential. Indeed, an ordinary (round) cylinder with $H=h$ is an origami cylinder (albeit without folds) in our sense. As a corollary of this theorem, by triangulating each of the planar faces, we can assume without loss of generality that a fold pattern has triangular faces only.

As a second consequence of this theorem, the horizontal cross sections of an origami cylinder are planar polygons whose edge lengths are determined by the requirement of isometry. The remainder of this section is devoted to explicit expressions of the turning angles of these polygons, which we require in order to analyse the gap function mentioned above. 

To this end, we first decompose fold patterns into single \emph{strip fold patterns}. Denote by \emph{vertices} the intersections of fold lines with each other or the domain's boundary.  Strips then arise from cutting along the horizontal lines passing through each of the fold pattern's vertices,
see Figure~\ref{fig:schwarzcirc}.
Notice that if a fold pattern is isometrically compressible in our sense, then at least one of its strips must be compressible as well. Conversely, if no strip is compressible, then the entire fold pattern is incompressible.

We then examine the possible embeddings of a strip with material height $h$ and prescribed embedded height $H<h$. Here we also consider \emph{open} embeddings of the strip, i.e., those that do not close up into a cylinder but where horizontal lines in the fold pattern still map to curves perpendicular to the $z$-axis, see Figure~\ref{fig:fanembedding}.
Theorem~\ref{thm:flatness} can be readily extended to open embeddings of a strip, implying that the embedded strip's upper and lower boundaries are planar polygonal lines. We show that for the case $H<h$ the turning angles of these two polygonal lines are each constrained to a finite set of values that \emph{analytically} depend on $H$ and the fold pattern only.

In order to relate the prescribed embedded height $H$ to the turning angles of the embedded strip's boundaries,  we require the notion of \emph{fans}. A fan is a \emph{maximal} subset of the material that is bounded between fold lines that emanate from a common vertex (the \emph{apex}), see Figure~\ref{fig:fanexample}. We call a fan \emph{downward} if the apex lies on the top of the strip; otherwise, we call it \emph{upward}. Notice that downward and upward fans alternate and that two consecutive fans share a common fold line.
We call the two outermost fold lines and vertices of a fan \emph{exterior}, the others \emph{interior}, see Figure~\ref{fig:fanexample}. 

\subsection{Turning angles at interior vertices}
In what follows we treat angles to lie in the interval $(-\pi, \pi]$. Accordingly, all equalities of this section that are concerned with angles are to be understood modulo $2\pi$. 

We first focus on determining the turning angles at interior vertices of the polygonal boundary of an embedded fan. We only treat downward fans here; upward fans can be dealt with similarly. As before, denote by $H$ the embedded height and by $h$ the material height of the fan.
Let $a$ denote the apex of the fan in the material domain and label the fan's bottom vertices $p_1$ through $p_n$. Denote by $A$ the embedded position of $a$ and let $P_i$ denote the embedded position of $p_i$ for $i=1,\dots, n$, see Figure~\ref{fig:fanembedding}. Suppose for simplicity that $A$ lies on the $z=H$ plane, that $P_1$ through $P_n$ lie in the $(xy)$-plane, and denote by $\bar{A}$ the vertical projection of $A$ onto the $(xy)$-plane. Then we have:
\begin{figure}[t]
  \centering
  \newcommand{\factor}{0.48}
  \begin{minipage}[t]{\factor\linewidth}
	\centering
	\includegraphics[width=\linewidth]{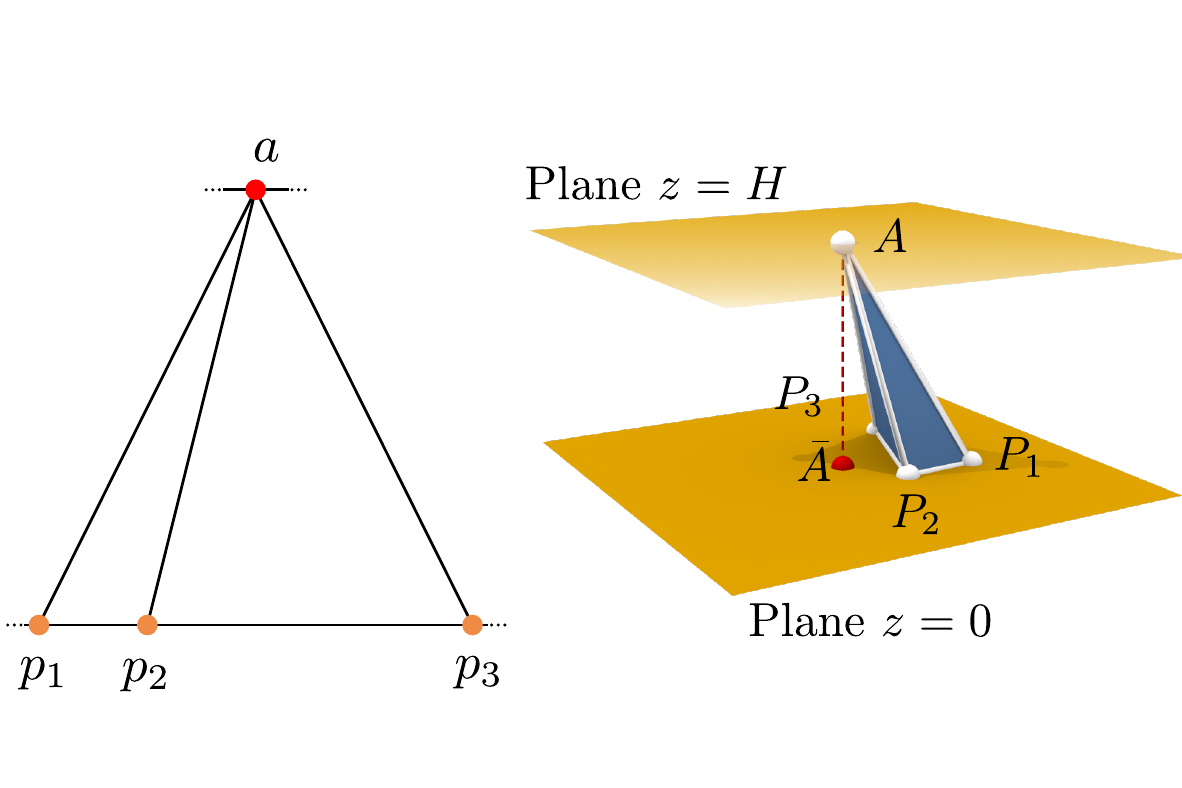}
	\caption{A fan with 4 vertices is embedded (right image)}
	\label{fig:fanembedding}
  \end{minipage}
  \hfill
  \begin{minipage}[t]{\factor\linewidth}
	\centering
	\includegraphics[width=\linewidth]{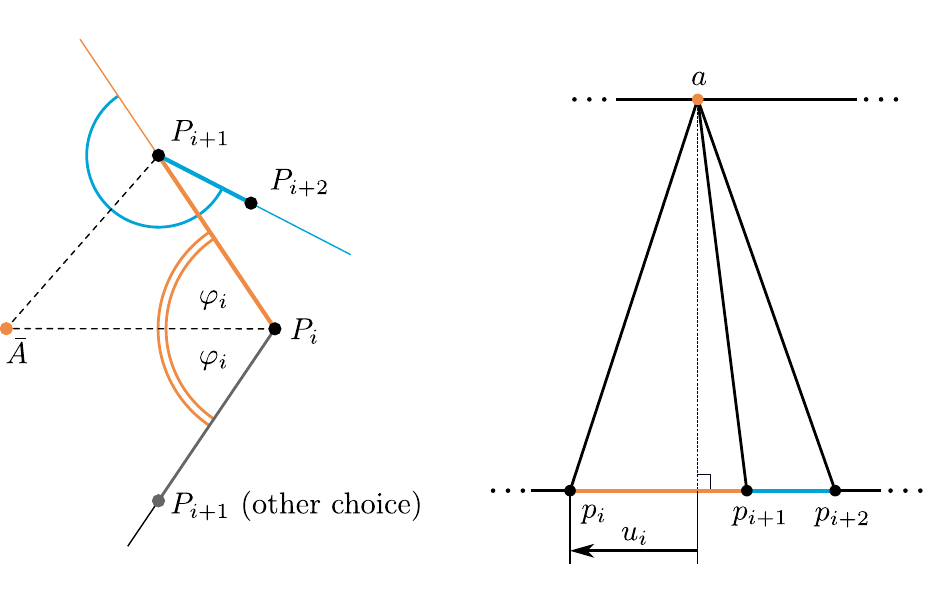}
	\caption{To obtain $P_{i+1}$, draw rays which enclose an angle of $\varphi_i$ with the line $\bar{A}P_i$.}
	\label{fig:fig3}
  \end{minipage}
\end{figure}

\begin{lemma}
  \label{lem:angle}
  The angle $\varphi_i = \angle \bar{A}P_iP_{i+1} \in (-\pi, \pi]$ satisfies
  \begin{align}
	\label{eq:angle}
	\cos(\varphi_i) \mathrel{=} \frac{-u_i}{\sqrt{h^2-H^2+u_i^2}}\ 
  \end{align}
 for all $i = 1,\dots, n-1$, where $u_i$ is the signed \emph{horizontal} distance in the material from $a$ to $p_i$, see Figure~\ref{fig:fig3}.
\end{lemma}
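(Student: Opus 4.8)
The plan is to reduce the computation to a single planar triangle together with one elementary projection. First I would invoke Theorem~\ref{thm:flatness}, in the form extended to open strips announced above: the face of the downward fan bounded by the two fold lines $ap_i$ and $ap_{i+1}$ is embedded as a planar triangle, and since an isometry between flat triangles is a rigid motion, this face is mapped onto the triangle $AP_iP_{i+1}$ \emph{congruently}. In particular $|AP_i| = |ap_i|$, $|P_iP_{i+1}| = |p_ip_{i+1}|$, and $\angle AP_iP_{i+1} = \angle ap_ip_{i+1}$.

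Next I would evaluate these quantities in the material domain. Since $a$ lies on the top edge of the fan and $p_1,\dots,p_n$ on its bottom edge, $a$ sits at vertical offset $h$ above the line carrying the $p_i$, while its signed horizontal offset from $p_i$ is $u_i$ by definition; hence $|ap_i| = \sqrt{h^2+u_i^2}$. Placing $a=(0,h)$ and $p_i=(u_i,0)$ in material coordinates, with the $p_i$ labelled from left to right so that $u_1<\dots<u_n$ and $\overrightarrow{p_ip_{i+1}}$ points in the direction of increasing $u$, a one-line dot-product computation gives $\cos\angle ap_ip_{i+1} = -u_i/\sqrt{h^2+u_i^2}$.

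It remains to pass from the spatial angle $\angle AP_iP_{i+1}$ to the planar angle $\varphi_i=\angle\bar AP_iP_{i+1}$. By the normalisation in the statement, $A$ lies at height $H$ and $P_i,P_{i+1}$ lie in the $(xy)$-plane, so $\overrightarrow{P_iP_{i+1}}$ is horizontal while $\overrightarrow{P_iA}=\overrightarrow{P_i\bar A}+H\hat z$ with $\overrightarrow{P_i\bar A}$ horizontal. Hence $\overrightarrow{P_iA}\cdot\overrightarrow{P_iP_{i+1}} = \overrightarrow{P_i\bar A}\cdot\overrightarrow{P_iP_{i+1}}$, while $|\overrightarrow{P_i\bar A}| = \sqrt{|AP_i|^2-H^2} = \sqrt{h^2-H^2+u_i^2}$. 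Dividing the two expressions for this common dot product and using the previous paragraph,
\[
\cos\varphi_i \;=\; \cos\angle AP_iP_{i+1}\cdot\frac{|P_iA|}{|P_i\bar A|} \;=\; \frac{-u_i}{\sqrt{h^2+u_i^2}}\cdot\frac{\sqrt{h^2+u_i^2}}{\sqrt{h^2-H^2+u_i^2}} \;=\; \frac{-u_i}{\sqrt{h^2-H^2+u_i^2}},
\]
which is exactly \eqref{eq:angle}.

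I expect the only real subtlety to be bookkeeping rather than mathematics: one must be sure that the lengths in the first step are genuinely the Euclidean distances between the embedded points --- this is precisely what the planarity of faces from Theorem~\ref{thm:flatness} supplies, and is what would fail for an arbitrary isometric surface --- and one must fix the labelling so that the factor $\operatorname{sgn}(u_{i+1}-u_i)$ equals $+1$ and drops out of $\cos\angle ap_ip_{i+1}$. The projection step is the only place assumption (iii) enters, through the horizontality of $\overrightarrow{P_iP_{i+1}}$; everything else is the law of cosines in disguise, and the upward-fan case is identical after the obvious sign changes.
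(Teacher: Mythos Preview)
Your argument is correct. The paper's proof is organised a bit differently: it works directly in the projected planar triangle $\bar A P_i P_{i+1}$, observes that its three side lengths are $|P_iP_{i+1}|=u_{i+1}-u_i$, $|P_i\bar A|=\sqrt{h^2-H^2+u_i^2}$, and $|P_{i+1}\bar A|=\sqrt{h^2-H^2+u_{i+1}^2}$, and then the stated formula is literally the law of cosines at the vertex $P_i$. Your route instead first computes the material angle $\angle ap_ip_{i+1}$, transports it to the spatial angle $\angle AP_iP_{i+1}$ via congruence of the flat face, and then projects to obtain $\varphi_i$. Both approaches rely on the same ingredients --- isometry of fold lines and of the boundary segment (hence implicitly Theorem~\ref{thm:flatness}) together with the vertical drop from $A$ to $\bar A$ --- and both amount to the law of cosines; the paper's version simply collapses your three steps into one by never introducing the intermediate three-dimensional angle. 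Your version has the minor expository advantage of making the appeal to flatness explicit, while the paper's is a line shorter.
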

\begin{proof}
  The triangle $\bar{A}P_iP_{i+1}$ has side lengths $|P_iP_{i+1}|=u_{i+1}-u_i$, $|P_i\bar{A}|=\sqrt{h^2-H^2+u_i^2}$, and $|P_{i+1}\bar{A}| = \sqrt{h^2-H^2+u_{i+1}^2}$. Our claim is then equivalent to the law of cosines.
\end{proof}

As a consequence of this lemma the embedded point $P_{i+1}$ must lie on one of two rays that originate from $P_{i}$ and that enclose an angle of $\pm\varphi_i$ with the line  $P_{i}\bar{A}$, see Figure~\ref{fig:fig3}. Notice furthermore that folds with zero dihedral angle necessarily give rise to an isometric embedding. Therefore, one of the two rays originating from $P_{i+1}$ containing $P_{i+2}$ must be the continuation of the line segment $P_iP_{i+1}$, see Figure~\ref{fig:fig3}. Thus, the turning angle of the polygonal line $(P_1, \dots, P_n)$ at vertex $P_i$, $2\leq i \leq n-1$, is either zero or given by $\pm 2\varphi_i$. In summary, we obtain:

\begin{lemma}
Using the above notation, the turning angle of an embedded fan's boundary at an interior vertex $P_i$ is either $0$ or equal to $\pm2\varphi_i$, where $\varphi_i$ is defined by~\eqref{eq:angle}.
In particular, when prescribing a fan's embedded height $H$, there are only finitely many possible isometric embeddings of a fan up to rigid body motions.
  \label{lem:turningangle}
\end{lemma}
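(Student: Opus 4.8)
The plan is to upgrade Lemma~\ref{lem:angle} from a statement about the angles $\varphi_i$ to one about the \emph{turning} angles of the boundary polygon $(P_1,\dots,P_n)$, by tracking the discrete choices one faces when building this polygon vertex by vertex. I treat a downward fan; an upward fan is handled identically with the planes $z=0$ and $z=H$ interchanged. Throughout I use Theorem~\ref{thm:flatness}, which applies because $H<h$: the embedded fan is the union of the rigid planar triangles $AP_iP_{i+1}$ glued along the embedded fold lines $AP_i$, so the embedding is completely determined by the vertices $A$ (on the plane $z=H$) and $P_1,\dots,P_n$ (in the plane $z=0$).

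First I would isolate the binary choice made at each step. With $A$ and $P_i$ held fixed, the rigid triangle $AP_iP_{i+1}$ can only rotate about the line $AP_i$, so its free vertex $P_{i+1}$ sweeps out a circle lying in a plane orthogonal to $AP_i$. That line is never vertical, since $|P_i\bar A|=\sqrt{h^2-H^2+u_i^2}>0$ for $H<h$; hence the swept circle is not contained in the plane $z=0$, and the requirement $z(P_{i+1})=0$ leaves at most two admissible positions for $P_{i+1}$. These two positions are interchanged by the reflection across the vertical plane through $A$, $P_i$, $\bar A$, equivalently by the reflection of the plane $z=0$ in the line $\bar AP_i$. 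This recovers the alternative between two rays already read off from Lemma~\ref{lem:angle}, with the extra information that the two rays are mirror images across $\bar AP_i$.

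Next I would identify, for $i\ge 2$, one of the two admissible positions as the straight continuation of the incoming edge, namely $Q=P_i+(u_{i+1}-u_i)\,(P_i-P_{i-1})/|P_i-P_{i-1}|$. Since $P_{i-1},P_i$ lie in the plane $z=0$, so does $Q$, and by construction $|QP_i|=u_{i+1}-u_i$. Applying the law of cosines in the triangle $\bar AP_{i-1}P_i$ (whose side lengths are known exactly as in the proof of Lemma~\ref{lem:angle}) gives $\angle\bar AP_iP_{i-1}=\pi-\varphi_i$, from which a one-line computation yields $|Q\bar A|=\sqrt{h^2-H^2+u_{i+1}^2}$. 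Thus $Q$ respects both prescribed edge lengths and the height constraint, so it is indeed one of the two admissible positions; choosing $P_{i+1}=Q$ makes $P_{i-1},P_i,P_{i+1}$ collinear, so the turning angle at $P_i$ is $0$. This is the concrete manifestation of the remark that zero-dihedral-angle folds still give isometric embeddings. The other admissible position is the reflection of $Q$ in the line $\bar AP_i$; since the ray $P_iQ$ makes angle $\varphi_i$ with $\bar AP_i$ (it is the supplement of $\angle\bar AP_iP_{i-1}=\pi-\varphi_i$, as $P_iQ$ and $P_iP_{i-1}$ are opposite rays) and reflecting a direction in a line rotates it by twice the angle it makes with that line, this choice turns the polygon by $\pm2\varphi_i$ at $P_i$. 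Working modulo $2\pi$ as agreed, this proves the first assertion.

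For the finiteness statement I would use the normalization already in force (place $\bar A$ at the origin, $A=(0,0,H)$, and use a rotation about the $z$-axis to fix $P_1$ on its prescribed circle): each of $P_2,\dots,P_n$ then admits at most two positions, so at most $2^{\,n-1}$ vertex configurations occur, each extending by Theorem~\ref{thm:flatness} to a unique embedded fan; hence finitely many up to rigid motions. The one step I expect to require genuine care is the third — verifying that the straight-continuation point $Q$ lies in the plane $z=0$ \emph{and} simultaneously realizes both prescribed edge lengths — since that is exactly where the special metric of a fan (apex at height $h$ over a straight bottom edge) enters; everything else is bookkeeping with the law of cosines and planar reflections.
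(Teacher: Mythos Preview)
Your proposal is correct and follows essentially the same route as the paper: identify the two candidate rays for $P_iP_{i+1}$ from Lemma~\ref{lem:angle}, recognize that one of them is the straight continuation of $P_{i-1}P_i$, and conclude that the other is its reflection in $\bar A P_i$, yielding turning angle $\pm 2\varphi_i$. The only difference is stylistic: where the paper invokes the one-line observation that the zero-dihedral-angle fold is always admissible (hence the straight continuation must be one of the two rays), you verify this by an explicit law-of-cosines computation showing $|Q\bar A|=\sqrt{h^2-H^2+u_{i+1}^2}$; both arguments establish the same fact, and your explicit $2^{n-1}$ count makes the finiteness clause equally transparent.
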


As an illustrative example consider the case $n=3$, where there exists only a single interior vertex, and for which there are exactly two possible embeddings up to rigid motions and reflections across planes, see Figure~\ref{fig:embeddingexample}.

We have thus determined the possible turning angles at the interior vertices of each embedded fan of prescribed height $H$. It remains to relate the turning angles at \emph{exterior} vertices, i.e., turning angles between consecutive downward (or upward) fans, to the embedded height $H$. 

\begin{figure}[t!]
	\centering
	\newcommand{\factor}{0.48}
	\begin{minipage}[t]{\factor\linewidth}
	  \includegraphics[width=\linewidth]{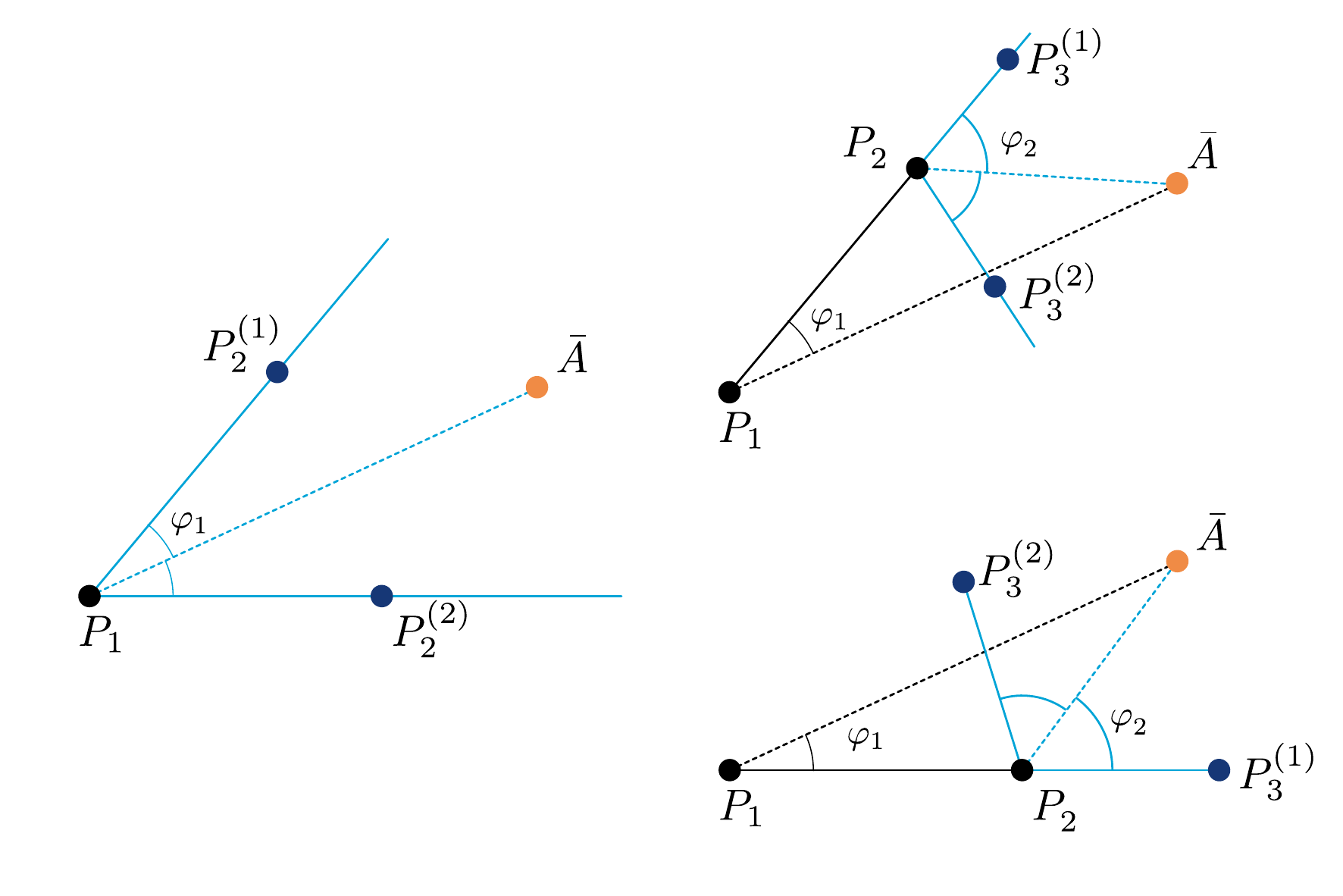}
	  \caption{For $n=3$, there are four different possibilities ($P_2$ and $P_3$ can be placed at two locations each). Since the construction is symmetric about the line $P_1\bar{A}$, there exist only two 'properly distinct' possible embeddings.}
	  \label{fig:embeddingexample}
	\end{minipage}
	\hfill
	\begin{minipage}[t]{\factor\linewidth}
	  \includegraphics[width=\linewidth]{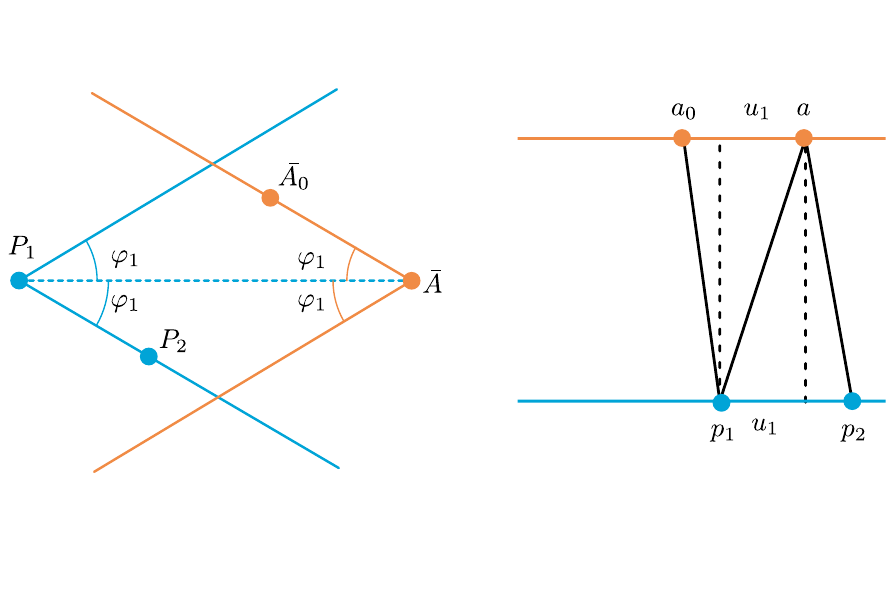}
	  \caption{The lines $P_1P_2$ and $\bar{A_0}\bar{A}$ enclose the same angle with $P_1\bar{A}$. Hence, they are either parallel or intersect at an angle of $\pm2\varphi_i$.}
	  \label{fig:parallelity}
	\end{minipage}
\end{figure}

\subsection{Turning angles at exterior vertices}
Fix a downward fan with apex $a$ and denote by $a_0$ and $a_2$ the vertices adjacent to $a$ on the strip's upper boundary, with their embedded locations denoted by $A,A_0$, and $A_2$, respectively. As above, denote by $p_1, \dots, p_n$ the vertices of the fan with apex $a$, with embedded positions $P_1, \dots, P_n$, and let $\varphi_i$ be the angles given by Lemma~\ref{lem:angle}. (Again, for simplicity we assume that $A$ lies on the $z=H$ plane and that $P_1$ through $P_n$ lie in the $(xy)$-plane.) Denote by $\bar{A_0}$, $\bar{A}$, and $\bar{A_2}$ the vertical projections of the points $A_0$, $A$, and $A_2$, respectively, onto the $(xy)$-plane. The following result relates the exterior turning angle $\alpha = \pi - \angle A_0 A A_2 = \pi - \angle \bar{A_0} \bar{A} \bar{A_2}$ of the strip's upper boundary to the angles $\varphi_i = \angle \bar{A}P_iP_{i+1}$ at the strip's lower boundary.

\begin{figure}[t]
  \centering
  \newcommand{\factor}{0.493}
	\includegraphics[width=\linewidth]{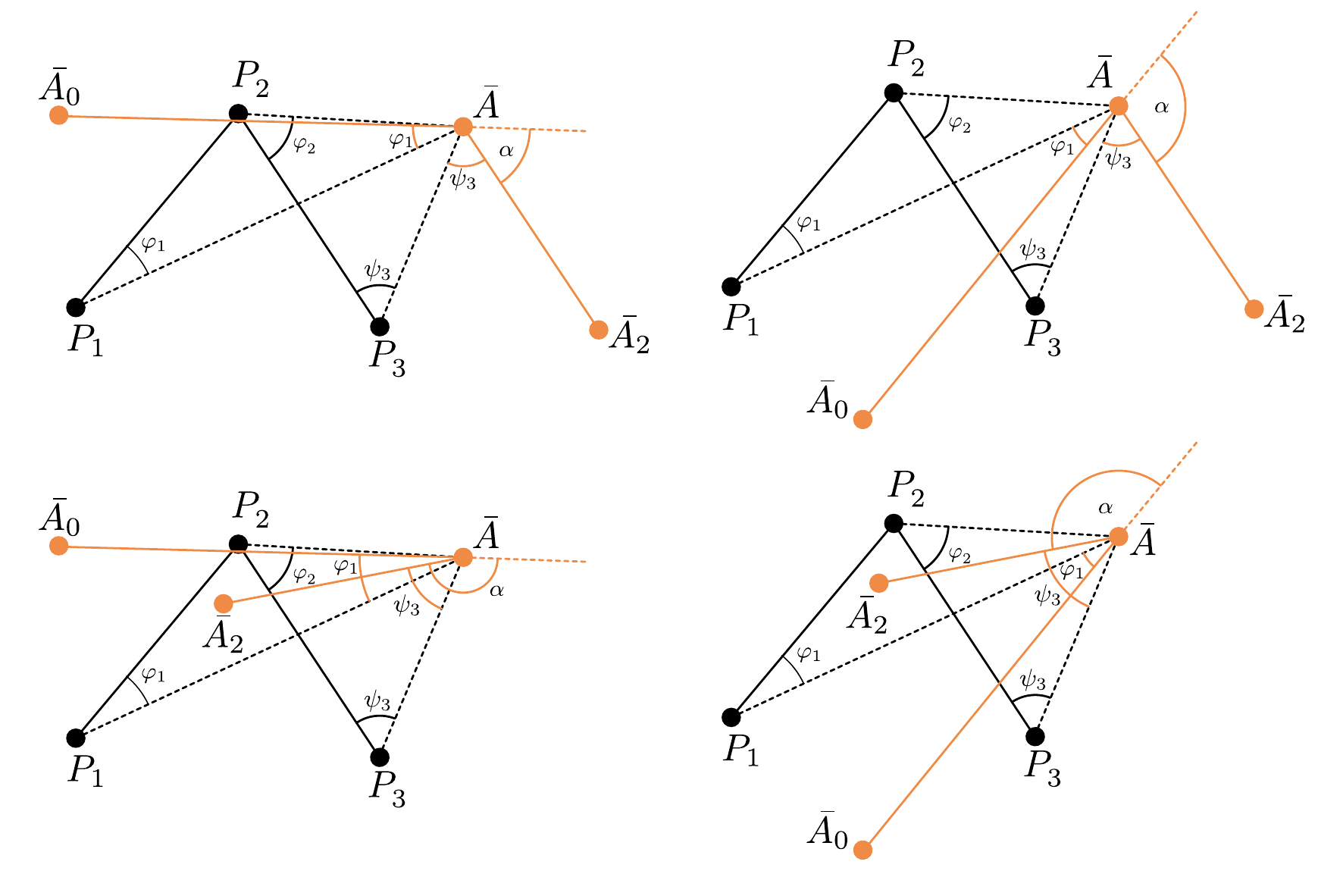}
	\caption{For one of the embeddings for $n=3$ depicted in Fig.~\ref{fig:embeddingexample}, there are four different ways to embed the points $\bar{A_0}$ and $\bar{A_2}$, resulting in four different possible values of $\alpha$.}
	\label{fig:alpha}
\end{figure}
\begin{lemma}
For a fixed downward fan with apex $a$, let $\beta$ denote the angle between the lines $P_1P_2$ and $P_{n-1} P_n$ and let $\psi_n:=\pi-\varphi_n$. Then the angle $\alpha= \pi - \angle A_0 A A_2$ satisfies $(\alpha + \beta) \in \{0, \pm 2\varphi_1, \pm 2 \psi_n, \pm 2 \varphi_1 \pm 2\psi_n\}\mod 2\pi$.
\end{lemma}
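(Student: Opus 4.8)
The plan is to argue in a horizontal plane, relating the directions of the two top-boundary edges $A_0A$ and $AA_2$ meeting at $A$ to the directions of the downward fan's two outermost bottom-boundary edges $P_1P_2$ and $P_{n-1}P_n$, using the exterior fold lines $ap_1$ and $ap_n$ as bridges. First I would fix the local combinatorics: the exterior fold line $ap_1$ is shared with the neighbouring upward fan whose apex is $p_1$, and $a_0$ is exactly the top-boundary vertex of that fan adjacent to $a$, so the face of that fan meeting $ap_1$ is the triangle $p_1a_0a$ (planar when embedded, since $H<h$ and Theorem~\ref{thm:flatness} applies, also to open strips); symmetrically $p_naa_2$ is a face of the upward fan with apex $p_n$. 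Thus $A_0A$ is a side of the embedded triangle $P_1A_0A$ and $AA_2$ of $P_nAA_2$, and each shares the embedded fold line with a triangle of the downward fan — $AP_1P_2$, respectively $AP_{n-1}P_n$. Since $a,a_0,a_2$ all lie on the strip's top boundary, their images lie in the plane $z=H$, so $\angle A_0AA_2=\angle\bar{A_0}\bar{A}\bar{A_2}$, $\overrightarrow{A_0A}=\overrightarrow{\bar{A_0}\bar{A}}$ and $\overrightarrow{AA_2}=\overrightarrow{\bar{A}\bar{A_2}}$, and we may work entirely in the $(xy)$-plane.

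The crux — illustrated by Figure~\ref{fig:parallelity} — is that $\overrightarrow{A_0A}$ is either parallel to $\overrightarrow{P_1P_2}$ or makes angle $\pm2\varphi_1$ with it. By Lemma~\ref{lem:angle}, $\overrightarrow{P_1P_2}$ makes angle $\varphi_1$ with the line $P_1\bar{A}$. I would then show that $\overrightarrow{\bar{A_0}\bar{A}}$ makes the \emph{same} angle $\varphi_1$ with the \emph{same} line $P_1\bar{A}$; since both directions are then obtained from the reference direction of $P_1\bar{A}$ by a rotation of $\pm\varphi_1$, their difference lies in $\{0,\pm2\varphi_1\}$, which is the dichotomy. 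To get the "same angle" claim, lift $P_1$ vertically to $\tilde P_1:=P_1+(0,0,H)$; then $\tilde P_1,A,A_0$ all lie in the plane $z=H$, so the triangle $\tilde P_1AA_0$ is a horizontal translate of $P_1\bar{A}\bar{A_0}$ and in particular $\angle P_1\bar{A}\bar{A_0}=\angle\tilde P_1AA_0$. Its three side lengths are fixed a priori: $|\tilde P_1A|=\sqrt{h^2-H^2+u_1^2}$ and $|\tilde P_1A_0|=\sqrt{h^2-H^2+w_0^2}$ are the horizontal components of the fold lines $p_1a$ and $p_1a_0$ (with $w_0$ the signed material horizontal distance from $p_1$ to $a_0$), and $|AA_0|$ is the material length of the top segment $aa_0$; using that $a_0$ neighbours $a$ on the boundary, the law of cosines collapses — exactly as in the proof of Lemma~\ref{lem:angle} — to $\cos\angle\tilde P_1AA_0=-u_1/\sqrt{h^2-H^2+u_1^2}$, i.e.\ $\angle\tilde P_1AA_0=\varphi_1$.

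The mirror-image argument along $ap_n$ gives the same dichotomy on the right side, with $\varphi_1$ replaced by $\psi_n=\pi-\varphi_n$: the law of cosines in the triangle $\bar{A}P_{n-1}P_n$ gives $\angle\bar{A}P_nP_{n-1}=\psi_n$, and lifting $P_n$ to height $H$ gives $\angle P_n\bar{A}\bar{A_2}=\psi_n$, so $\overrightarrow{AA_2}$ is parallel to $\overrightarrow{P_{n-1}P_n}$ or makes angle $\pm2\psi_n$ with it. It remains to add the three relations. Writing $\alpha$ for the turning angle carrying $\overrightarrow{A_0A}$ to $\overrightarrow{AA_2}$ (which is $\pi-\angle A_0AA_2$ up to the sign conventions of this section) and $\beta$ for the turning angle carrying $\overrightarrow{P_1P_2}$ to $\overrightarrow{P_{n-1}P_n}$ (the angle between those lines), the direction change from $\overrightarrow{P_1P_2}$ to $\overrightarrow{A_0A}$ lies in $\{0,\pm2\varphi_1\}$ and that from $\overrightarrow{P_{n-1}P_n}$ to $\overrightarrow{AA_2}$ in $\{0,\pm2\psi_n\}$, so $\alpha-\beta\in\{0,\pm2\varphi_1,\pm2\psi_n,\pm2\varphi_1\pm2\psi_n\}\bmod2\pi$; this set being symmetric under negation, and these turning angles being determined only up to sign in the present conventions, the same membership holds for $\alpha+\beta$.

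The main obstacle I anticipate is sign bookkeeping rather than any geometric difficulty: reconciling the "signed horizontal distance" conventions between the downward fan and its two flanking upward fans so that the law-of-cosines cancellations genuinely land on $\varphi_1$ and on $\psi_n$ (not on their supplements $\pi-\varphi_1,\varphi_n$), and tracking the signs of the turning angles modulo $2\pi$. Identifying which of the "parallel" ($0$) and "angle $\pm2$" alternatives actually occurs — the $0$ case being the one where a vanishing dihedral angle lets the relevant edge continue straight — is a further minor check, but for the stated membership it can be omitted. Everything else is the single elementary observation that two triangles glued along an edge, once their horizontal projections are pinned down by isometry, confine each boundary edge's direction to a two-element set.
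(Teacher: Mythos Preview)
Your approach is essentially the paper's: both hinge on showing that $\angle P_1\bar A\bar{A_0}=\pm\varphi_1$ (whence the parallel-or-$\pm2\varphi_1$ dichotomy for $\bar{A_0}\bar A$ versus $P_1P_2$), then doing the mirror argument with $\psi_n$ and combining. The only cosmetic difference is that the paper obtains $\angle P_1\bar A\bar{A_0}=\pm\varphi_1$ by invoking Lemma~\ref{lem:angle} directly for the upward fan with apex $p_1$, whereas you lift $P_1$ to height $H$ and redo the law-of-cosines computation by hand; these are the same calculation, and your acknowledged sign-bookkeeping caveat matches the paper's own level of rigor on this point.
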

\begin{proof}
Observe that $a$ is an exterior vertex of the upward fan with apex $p_1$.
By applying Lemma~\ref{lem:angle} to the upward fan with apex $p_1$, we find that $\angle P_1\bar{A}\bar{A_0}= \pm \varphi_1$, see Figure~\ref{fig:parallelity}.
With reference to Figure~\ref{fig:parallelity}, it follows that the two lines $P_1P_2$ and $\bar{A}\bar{A_0}$ are either parallel or enclose an angle of $\pm2\varphi_1$. Likewise, the two lines $P_{n-1}P_n$ and $\bar{A}\bar{A_2}$ are either parallel or enclose an angle of  $\pm2\psi_n$. Together, one obtains four possible configurations for the polygonal line segments $A_0AA_2$, which (up to sign) lead to four possible values for the angle $\alpha$, see Figure~\ref{fig:alpha}. 
\end{proof}

Since the angle $\beta$ in the previous lemma between the lines $P_1P_2$ and $P_{n-1} P_n$ can be expressed (modulo $2\pi$) as a sum of the turning angles between the lines $P_{i-1}P_i$ and $P_{i}P_{i+1}$, $2\leq i \leq n-1$, and since these turning angles are either zero or given by $\pm2\varphi_i$, we obtain that $\alpha= \pi - \angle A_0 A A_2$ satisfies
\begin{align}\label{eq:alpha}
\alpha = \sum_{i=1}^{n-1} s(i) 2\varphi_i + s(n) 2 \psi_n \ , 
\end{align}
where $s(j)  \in \{-1,0,1\}$ and the exact value can be obtained from the preceding discussion.

In summary, using Lemma~\ref{lem:turningangle} and Equation \eqref{eq:alpha}, every possible value of every turning angle of both of a strip's boundaries can be expressed in terms of parameters known from the fold pattern and the strip's embedded height $H$ alone. Note in particular that all of these angles are analytic functions of $H$.

Our discussion shows that turning angles (interior and exterior) can only attain a finite number of values, which each depend analytically on $H$. Thus, since the set of values of turning angles at a given height $H$ is discrete, it follows that if an origami cylinder's height is changed continuously, then each of the resulting embeddings uniquely determines all of the turning angles for all other embeddings during the deformation.

\section{Cylinder compression}
\label{sec:can_compression}
Using the notation and results established above, we show that a large class of origami cylinders \emph{cannot} be compressed isometrically. In particular, our results prove that the fold patterns considered by Yasuda and Yang~\cite{Yasuda2015} are one of the very few patterns that allow for rigid foldability. For convenience, we repeat the statement of Theorem~\ref{thm:compressibility} from Section~\ref{sec:cans}:

\begin{figure}[t]
	\centering
	 \includegraphics[width=0.6\linewidth]{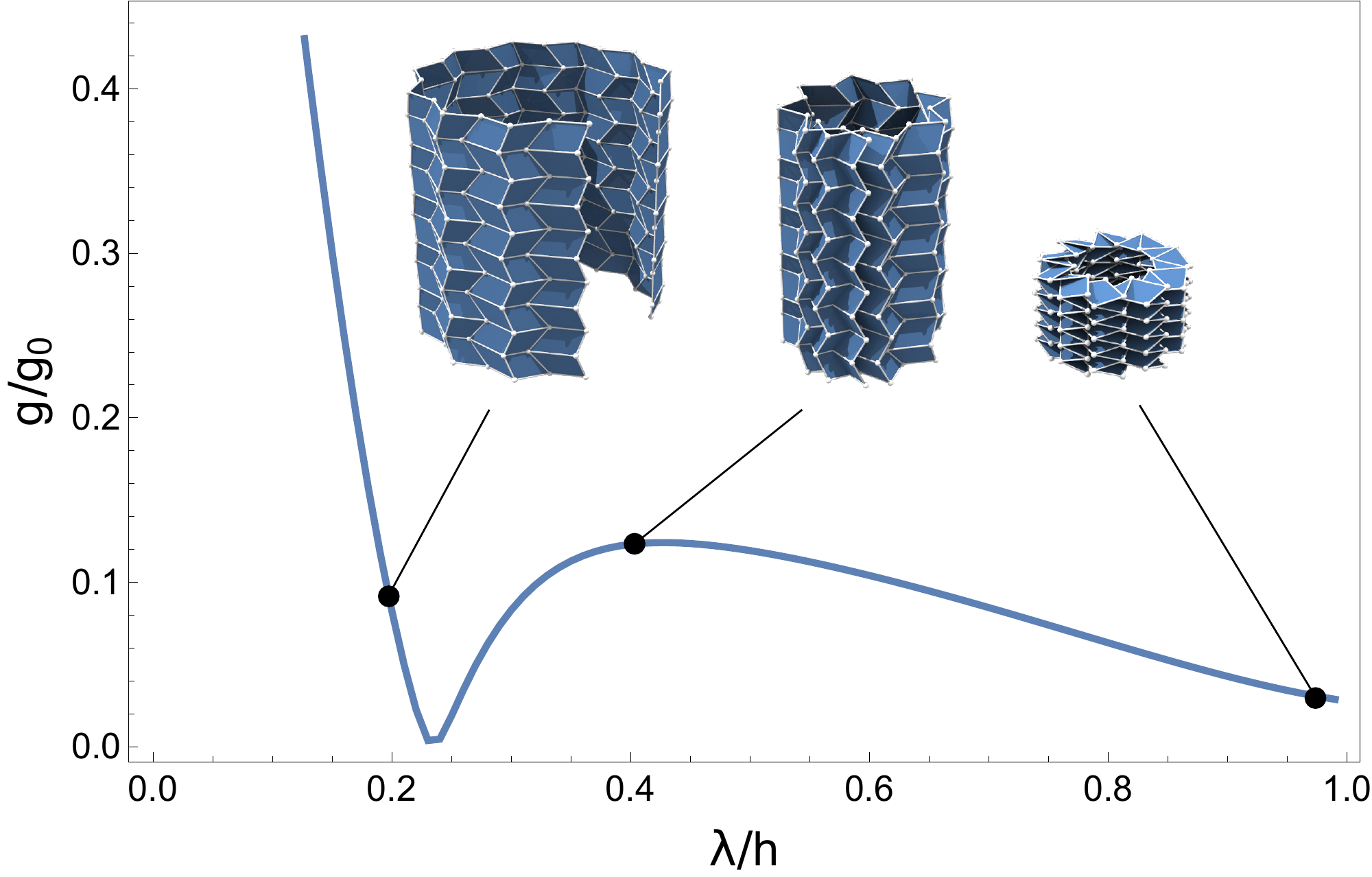}
	\caption{Plot of normalized gap magnitude as a cylindrical Miura pattern is compressed. We prove this function vanishes at only finitely many values of $\lambda$, and so the pattern is not rigid-foldable.}
	\label{fig:magnitudeplot}
\end{figure}

\begin{repthm}{thm:compressibility}
 An origami cylinder whose fold pattern does not contain vertical fold lines has at most a finite number of heights at which it can be isometrically embedded. Hence, the absence of vertical folds prohibits isometric deformations with continuously varying heights.
\end{repthm}
\begin{proof}
  We show that Theorem~\ref{thm:compressibility} holds for a single strip that contains no vertical fold line.
 Recall that for a given fold pattern and a given embedded height $H$, each of the finitely many \emph{open} strip embeddings can be encoded by the choices of the turning angles according to Lemma~\ref{lem:turningangle} and Equation~\eqref{eq:alpha}. Given these angles, the requirement of isometry shows that the position of the embedded strip's bottom-right corner $C_2$ is uniquely determined by the position of the bottom-left corner $C_1$. A necessary condition for the embedded strip to close up is $C_1 = C_2$; one measure of the failure of the strip to close up is the (squared) \emph{gap length} $g=|C_2-C_1|^2$, which, according to the previous section, is a function of the embedded height $H$ alone. As an example, Figure~\ref{fig:magnitudeplot} illustrates the function $g$ for a simple Miura-type origami that has been rolled up into a cylindrical shape.

The key observation is that for every embedding of the strip, the corresponding gap $g$ is a \emph{real analytic} function of the embedded height, $H$. Indeed, by summing up the individual line segment vectors of either of the strip's boundaries, $g$ can be expressed via the boundary's turning angles, which are analytic in $H$ according to the preceding section. Since the existence of a cylindrical embedding at some height $H$ necessitates that $g(H)=0$, it follows in particular that if the cylinder were continuously deformable from $H_0$ to $H_1$, then there would be a choice of turning angles for which $g$ would vanish on the entire interval $[H_0,H_1]$. Because $g$ is an analytic function, this is equivalent to $g(H)$ being zero for \emph{every} value of $H$.

Consider now the limiting case $H\to h$. By continuity, the resulting origami cylinder does not contain curved surface pieces. 
In this case, the assumption that the strip has no vertical fold line implies that the only possible embedding of the strip is the rigid one, i.e., the embedding of the material as a flat rectangle of size $l\times h$ perpendicular to the $(xy)$-plane. It follows that $g(h) = l \neq 0$. This shows that $g$ cannot be identically zero on some open interval; instead, its zeros are isolated. Hence, the set of all heights that allow for an isometric cylindrical embedding is finite.
\end{proof}

As an additional consequence of our discussions we conclude that vertical fold lines -- such as those appearing in the
construction by Yasuda and Yang~\cite{Yasuda2015} -- must have a dihedral angle that is either equal to 0 or equal to $\pm\pi$.
\begin{obs*}
  Each fold whose fold line in the fold pattern is vertical must have dihedral angle $0$ or $\pi$ whenever $H<h$. 

  Furthermore, in each compressible origami cylinder, there must exist at least two folds with vertical fold lines whose dihedral angles are constantly equal to $\pm\pi$.
  \label{thm:verticalfolds}
\end{obs*}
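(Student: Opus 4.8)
The plan is to revisit the turning-angle analysis of Section~\ref{sec:construction} and specialise it to a fan whose \emph{exterior} fold line is vertical in the material domain, i.e.\ a fold line with zero horizontal extent. Recall the setup of Lemma~\ref{lem:angle}: for a downward fan with apex $a$, the angle $\varphi_i$ at the $i$-th bottom vertex is governed by $\cos(\varphi_i) = -u_i/\sqrt{h^2-H^2+u_i^2}$, where $u_i$ is the signed horizontal distance from $a$ to $p_i$. A vertical fold line from the apex to a bottom vertex $p_i$ means precisely $u_i = 0$. First I would substitute $u_i = 0$ into \eqref{eq:angle}: since $H < h$, the denominator $\sqrt{h^2-H^2}$ is strictly positive, so $\cos(\varphi_i) = 0$ and hence $\varphi_i = \pm\pi/2$. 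By Lemma~\ref{lem:turningangle} the turning angle at that vertex is either $0$ or $\pm 2\varphi_i = \pm\pi$. The same computation applied to the \emph{exterior} fold line (handled via Lemma~\ref{lem:turningangle} together with \eqref{eq:alpha}, observing that $\psi_n = \pi - \varphi_n$ with $\varphi_n = \pm\pi/2$ again forces $\psi_n = \pm\pi/2$) shows that the contribution of a vertical exterior fold line to $\alpha$ is $0$ or $\pm\pi$.

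The next step is to translate a turning angle of $0$ or $\pm\pi$ along an embedded fold line into a statement about its \emph{dihedral} angle. A fold line shared by two consecutive fans (or a fold line together with the planar face on either side, using Theorem~\ref{thm:flatness}) carries a dihedral angle $\theta$; the relevant turning angle of the horizontal cross-section polygon at that fold is, up to the sign conventions already fixed in Section~\ref{sec:construction}, twice the bending angle, so turning angle $0$ corresponds to $\theta = 0$ (a flat, non-fold) and turning angle $\pm\pi$ corresponds to $\theta = \pm\pi$ (a sharp crease, the surface folded back on itself). I would make this correspondence explicit by computing, for the two rays of Lemma~\ref{lem:angle}, the out-of-plane configuration of the adjacent faces: the case $\varphi_i = \pm\pi/2$ is exactly the case where the embedded fold line $A P_i$ is vertical, i.e.\ parallel to the $z$-axis, and the two faces meeting along it either lie in a common plane ($\theta = 0$) or are reflections of each other across that vertical plane ($\theta = \pm\pi$). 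This gives the first assertion of the Observation: every vertical fold line has dihedral angle $0$ or $\pi$ whenever $H < h$.

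For the second assertion I would argue by contradiction: suppose an origami cylinder is isometrically compressible (so by Theorem~\ref{thm:compressibility} it has at least one vertical fold line, and by the first part each such fold has dihedral angle $0$ or $\pm\pi$ throughout a nondegenerate interval of heights) but that \emph{at most one} vertical fold has dihedral angle constantly $\pm\pi$, all the others being constantly $0$. A fold with dihedral angle constantly $0$ is, by Theorem~\ref{thm:flatness} and the strip reduction, not a genuine crease: the material across it is flat, so we may erase it from the fold pattern without changing the set of isometric embeddings. After erasing all the $\theta \equiv 0$ vertical folds we are left with a fold pattern having at most one vertical fold line with $\theta \equiv \pm\pi$. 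But a single fold with dihedral angle $\pm\pi$ forces the cross-section polygon to reverse direction exactly once, which is incompatible with closing up into a simple closed horizontal curve of positive enclosed length unless it too is spurious — so effectively the pattern has \emph{no} vertical fold lines, and Theorem~\ref{thm:compressibility} then says it is incompressible, contradicting the hypothesis. Hence at least two vertical folds must have dihedral angle constantly $\pm\pi$.

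The main obstacle I anticipate is the last step: making rigorous the claim that a cross-section polygon built out of turning angles in $\{0, \pm\pi\}$ at the vertical folds (plus whatever the non-vertical folds contribute) cannot close up into an admissible cylinder cross-section unless at least two of the $\pm\pi$ turns are present. This is essentially a parity/winding argument — a closed polygon that is a genuine simple loop enclosing positive area cannot be traversed with a single $\pi$-reversal — but it interacts with the sign bookkeeping of \eqref{eq:alpha} and with the possibility that erasing $\theta \equiv 0$ folds merges fans in ways that must be tracked carefully through the strip decomposition. The cleaner route may be to phrase it directly in terms of the gap function $g(H)$ of the proof of Theorem~\ref{thm:compressibility}: with at most one vertical $\pm\pi$ fold the reduced pattern has no essential vertical fold, so the limiting argument $H \to h$ giving $g(h) = l \neq 0$ applies verbatim, forcing $g$ to have isolated zeros and contradicting compressibility.
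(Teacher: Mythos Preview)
For the first assertion, your route through Lemma~\ref{lem:angle} is viable, but one step is misstated: when $u_i = 0$ and $H < h$, the embedded fold segment $AP_i$ is \emph{not} vertical. Indeed $|AP_i| = |ap_i| = h$ while the vertical drop is only $H$, so the horizontal displacement $|\bar A P_i| = \sqrt{h^2 - H^2}$ is strictly positive. What your computation actually buys you is that a turning angle of $0$ or $\pm\pi$ at $P_i$ forces $P_{i-1}, P_i, P_{i+1}$ to be collinear; both adjacent triangles then lie in the single plane through $A$ and that line, and the dihedral angle can be read off from whether $P_{i-1}$ and $P_{i+1}$ lie on the same or opposite sides of $P_i$. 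So the argument can be repaired, but the justification you wrote is not the correct one. The paper's proof is more elementary and bypasses the turning-angle machinery: after triangulating, the vertical fold is a cathetus of each adjacent right triangle, so the two remaining catheti are horizontal in the material and embed to horizontal segments by assumption~(iii); if these two horizontal segments were linearly independent in space, the shared edge (orthogonal to both) would have to be perpendicular to the horizontal plane, forcing $H = h$.

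For the second assertion, your erasure-and-reduction strategy and the paper's argument are genuinely different, and the paper's is shorter while avoiding the obstacle you yourself flag. The paper does not reduce to Theorem~\ref{thm:compressibility}; instead it notes that compressibility makes the gap function vanish identically, hence also in the limit $H \to h$. In that limit every \emph{non}-vertical fold has zero dihedral angle (your own computation with $u_i \neq 0$ gives $\cos\varphi_i \to \pm 1$), so any surviving nonflat fold must be vertical with angle $\pm\pi$ by part one; since the strip still closes at $H = h$ it cannot be the flat rectangle, so such folds exist. Your ``cleaner route'' is close to this, but beware: a single vertical $\pm\pi$ fold does \emph{not} give $g(h) = l$ as you assert --- it gives $g(h) = 0$ whenever the fold sits at the material midpoint --- so the verbatim appeal to the $g(h) = l$ step of Theorem~\ref{thm:compressibility} fails. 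Both your proposal and the paper are somewhat informal on why \emph{two} such folds, rather than one, are required; the underlying reason is that a single $\pi$-reversal yields a degenerate, self-overlapping cross section that is not an embedding.
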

\begin{proof}
  Consider a triangulated strip fold pattern that contains a vertical fold line. This vertical fold line is a cathetus of both of its incident triangles. The other two catheti must be horizontal and thus embed to horizontal lines (by assumption (iii)). If the dihedral angle at the vertical fold line is not zero or $\pm\pi$, then these two catheti are linearly independent, implying that the fold corresponding to the vertical fold lines has to be vertical as well (since it must then be perpendicular to the entire horizontal plane). But then we must have that $H=h$. This shows that all folds with vertical fold lines must be either flat (zero dihedral angle) or sharp (dihedral angle $\pm\pi$).

 For the second claim, consider a compressible origami cylinder strip. The gap functions are analytic and thus they are constantly zero, hence they are also zero in the limit $H\to h$. In this limit, all folds with non-vertical fold lines have zero dihedral angle. But since the strip is still closed in this limit, there must exist folds with nonzero dihedral angle---which are precisely the vertical folds with dihedral angle $\pm\pi$.
\end{proof}

\section{Summary}
We have investigated the compressibility of cylindrical origami structures and have shown that in general such structures cannot be compressed isometrically. The fact that structures such as those presented in Figure~\ref{fig:teaser} and Figure~\ref{fig:nonbellows} nevertheless appear to be isometrically compressible suggests that the requisite stretching is either so diffused as to be invisible at the macroscopic scale or localized to specific regions---to the effect that macroscopic deformations of the material remain small. Simulations of elastic ridges done by Witten and Lobkovsky~\cite{Lobkovsky1996,Witten2007a} support the latter explanation. The principles leading to such localized stretching and their applicability to the design of foldable origami will be the subject of future work.

\section{Acknowledgements}
F.B. and M.W. acknowledge the support of the BMBF collaborative project MUSIKA. E.V. acknowledges the support by NSF grant DMS-1304211. We thank Shmuel Rubinstein for insightful discussions about our paper folding experiments.
\bibliographystyle{vancouver}
\bibliography{Origami}

%\disclaimer{Insert disclaimer text here.}

%\ethics{not applicable}

%\dataccess{not applicable}

%\funding{F.B. and M.W. are funded by the BMBF collaborative project MUSIKA. E.V. is supported by NSF grant DMS-1304211.}

%\conflict{We have no competing interests.}
\newpage
\appendix

\section{Regularity of the embedding}
\begin{figure}[t]
  \centering
  \newcommand{\factor}{0.493}
  \includegraphics[width=\factor\linewidth]{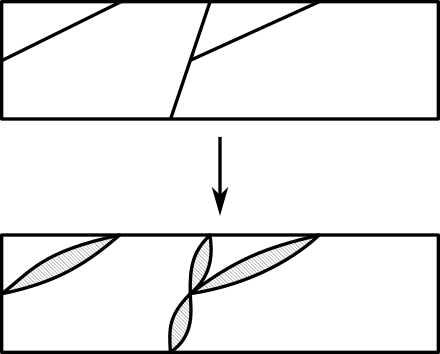}
  \caption{The initial rectangular domain (top) is split along its fold lines to distinguish the two sides of each fold (bottom). Hatched regions no longer belong to the domain.}
  \label{fig:regularity}
\end{figure}
In order to make precise our definition of \emph{piecewise $C^4$}, we
first remove the (finite number of) fold lines from the original material domain. Now consider the topological closure of the resulting set such that fold lines are doubled---with the two copies of each original fold line only identified at their endpoints, i.e., the new domain is a planar rectangle whose boundary consists of the original material boundary and two distinct line segments for each original fold line, running between the original fold line's end-points. See Figure~\ref{fig:regularity} for an example.
We require the embedding map $f$ to be a $C^4$-isometry (up to and including the boundary) on this new domain. By our requirements, fold lines get mapped to straight line segments; hence, the two copies of each original fold line must occupy the same position in space, such that the resulting surface does not suffer from any additional holes. Notice that a $C^4$-isometry (up to and including the boundary) excludes cone points, i.e., isolated singular points on the cylinder surface, and that it also implies that the cylinder is composed of flat polygons:

It is surprising that this result would require such a high regularity of the embedding when all notions required for developability such as e.g. curvature are already perfectly well-defined for surfaces of $C^2$ regularity. However, to the best of our knowledge, there exists no proof of Theorem~\ref{thm:flatness} that does not require the aforementioned $C^4$ regularity.

Using the above definitions, we now formalize our main result as: for any \emph{fixed} fold pattern, does there exist a corresponding one-parameter family of embeddings to origami cylinders of continuously varying axial height?

\section{Flatness of facets}
This section complements the main result by showing that, even though it is only required that fold lines are mapped to straight line segments, the intermediate pieces of the cylinder's surface have to be flat as well, thereby proving Theorem~\ref{thm:flatness}:

\begin{repthm}{thm:flatness}
  Every origami cylinder whose embedded height is strictly less than the material height consists entirely of planar faces that meet at the embedded fold lines. In particular, there are no creases in the embedding apart from those given by the fold lines.
\end{repthm}

Let \emph{facets} denote those connected components of the origami cylinder that arise when all folds are removed.
Internal facets are those that do not meet either of the cylinder's boundaries except at isolated vertices.
\subsection{Internal facets}
 Our requirements imply that internal facets must be flat:
\begin{lemma}
  \label{lem:internalflatness}
	Let $P$ be a closed non-intersecting polygon with finitely many corners in the plane, which is embedded isometrically via a $C^4$ map $f$ into $\mathbb{R}^3$ such that each boundary edge of $P$ is mapped to a straight line segment in $\mathbb{R}^3$. Then the resulting surface $S$ is contained in a plane.
\end{lemma}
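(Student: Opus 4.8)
\emph{Proof proposal.} The plan is to show that the unit normal (Gauss map) $N$ of $S$ is constant; since $f \in C^4$, $N$ is $C^3$, and if $N \equiv N_0$ then $\langle f,N_0\rangle$ has vanishing differential on the connected domain $P$, so $S$ lies in a plane orthogonal to $N_0$. The first step is the Theorema Egregium: as $f$ is an isometry and $P$ carries the flat metric, $S$ has vanishing Gaussian curvature, i.e.\ $\det(\mathrm{d}N)\equiv 0$; equivalently the shape operator $W=-\mathrm{d}N$ is everywhere self-adjoint with determinant zero, hence has eigenvalues $0$ and $\mu$ for some $\mu$. The second, and crucial, elementary step concerns the boundary: each boundary edge $e$ of $P$ is a straight segment in the flat domain, hence a geodesic, so its image $f(e)$ is a geodesic of $S$ with vanishing geodesic curvature; since $f(e)$ is by hypothesis also a straight segment of $\R^3$, its curvature in $\R^3$ vanishes, and therefore so does its normal curvature, i.e.\ $\II(t_e,t_e)=\langle Wt_e,t_e\rangle = 0$ along $e$, where $t_e$ is the unit tangent. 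Combining $\langle Wt_e,t_e\rangle=0$ with $\det W=0$ and self-adjointness forces $Wt_e=0$ (either $\mu=0$ and $W=0$, or $t_e$ is the null eigenvector). In particular $\mathrm{d}N(t_e)=0$ all along $e$, so $N$ is constant on each edge, and since $\partial P$ is connected we obtain $N\equiv N_0$ on all of $\partial P$ for a fixed $N_0\in S^2$.

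The remaining ingredient is the structure theorem for sufficiently regular developable surfaces, which is where the $C^4$ hypothesis enters (cf.\ the remark in the appendix): $P$ decomposes into relatively open \emph{planar regions}, on which $\mathrm{d}N\equiv 0$, and a complementary \emph{ruled region} foliated by straight line segments (\emph{rulings}) along each of which $N$ is constant; moreover every ruling, maximally extended inside $P$, is a chord with both endpoints on $\partial P$ and cannot dead-end in the interior of $P$. Granting this, the proof closes at once. Each ruling has both endpoints on $\partial P$, where $N=N_0$, and $N$ is constant along it, so $N\equiv N_0$ on every ruling and hence on the whole ruled region. Thus $\{N=N_0\}$ is a closed set containing $\partial P$ together with the ruled region, so its open complement $V$ lies inside the union of the planar regions; on each connected component $\Pi$ of $V$ the normal is a constant $N_1$, while $\partial\Pi\subseteq P\setminus V\subseteq\{N=N_0\}$, so by continuity $N_1=N_0$, a contradiction unless $V=\emptyset$. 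Hence $N\equiv N_0$ on $P$ and $S$ is planar.

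\emph{Main obstacle.} The Theorema Egregium input and the boundary computation are routine. The substantive difficulty is the structure theorem for developable surfaces at the stated regularity --- in particular the assertion that a ruling must run all the way to $\partial P$ rather than terminating in the interior of the domain. This is precisely the point flagged in the appendix as requiring $C^4$: at $C^2$ one already has $\det W=0$ and a well-defined asymptotic (ruling) direction at every non-planar point, but controlling the \emph{global} behaviour of the rulings --- excluding interior edges of regression, cone-like accumulation of rulings at an interior point, and similar degeneracies --- is the delicate part that the higher regularity buys. A secondary technical chore is to check that the planar and ruled pieces genuinely partition $P$ and interface along rulings, so that the final connectedness argument for the set $\{N=N_0\}$ is valid; one should also confirm that no separate analysis of ``planar versus non-planar boundary points'' is needed, since $Wt_e=0$ on edges holds regardless of whether the edge point is planar.
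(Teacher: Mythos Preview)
Your argument is correct and rests on the same two ingredients as the paper's proof: (a) straight boundary edges force $\II(t_e,t_e)=0$ along each edge, and (b) the $C^4$ structure theorem for developables (asymptotic lines through parabolic points are straight, consist entirely of parabolic points, and must reach $\partial P$), which the paper records as Massey's facts (i) and (ii). The organisation, however, is different. You first pin down $N\equiv N_0$ on the whole of $\partial P$ via $Wt_e=0$, then propagate this value inward along rulings and finally across the flat pieces by a connectedness argument. The paper instead argues by direct contradiction: assuming a parabolic point exists, it builds a one-parameter family of rulings, uses continuity to find one that meets an edge \emph{transversally} at a non-corner, and observes that such a boundary point carries two independent asymptotic directions (the edge tangent and the ruling), hence is flat---contradicting fact~(i). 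Your version is a bit more explicit and sidesteps the need to isolate a transversal intersection (since $Wt_e=0$ holds on edges whether or not a ruling is tangent there); the paper's version is shorter but relies on the continuity argument to avoid corners and tangencies. One small slip: the flat locus is closed, not ``relatively open'', but this does not affect your argument, since all you actually use is $V\subseteq\{\mathrm{d}N=0\}$.
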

\begin{proof}
  Since $S$ is developable, its Gauß curvature vanishes everywhere. Let parabolic points be those points of $S$ with nonvanishing mean curvature and call all other points flat. Asymptotic curves are those curves in $S$ whose tangent vectors coincide with directions of vanishing directional curvature. Notice that through each parabolic point there exists a unique (up to reparameterization and extension) asymptotic curve.
  We require the following two facts about developable $C^4$ surfaces~\cite{Massey1962}:
  \begin{enumerate}
	 \item[(i)] Every asymptotic curve that originates in a parabolic point cannot contain flat points.
	\item[(ii)] If an asymptotic curve contains a parabolic point, then it is a straight line segment.
  \end{enumerate}
Suppose that $S$ contained a parabolic point. By continuity of curvature, the set of parabolic points is open in $S$, so there exists a curve $\Gamma$ of nonzero arc length in $S$ that consists entirely of parabolic points and that intersects asymptotic curves transversally. Then (i) and (ii) imply that $S$ contains a family of straight line segments (consisting entirely of parabolic points), originating from each point of $\Gamma$. Each of these line segments intersect the boundary of $S$ in (at least) two different points which vary continuously with their origin on $\Gamma$. Since not both of these intersections can be constant along $\Gamma$ (if they were, then all asymptotic curves would be the same, thereby contradicting the fact that $\Gamma$ is transversal and has nonzero length), there exists a point on $\Gamma$ whose asymptotic line meets the boundary of $S$ transversally at a point that is not a corner. But then this point of intersection contradicts (i) since it is flat. Hence $S$ consists entirely of flat points.
 \end{proof}
In order to treat the remaining facets, i.e., those that meet the origami cylinder's boundary, we require a property about the reconstruction of developable ruled surfaces from maximal geodesics that are transversal to rulings.

\subsection{Reconstruction of a ruled surface from a known geodesic}
We call a ruled surface $S$ \emph{generated} by a
 \emph{directrix curve} $\gamma$ (with $ |\dot \gamma(t)| =1$ for all $t$) and a \emph{director curve} $v$ (with $|v(t)|=1$ for all $t$) if $S$ is given by $\{\gamma(t)+ sv(t) \}$.
\begin{thm}
	Let $S, \tilde S$ be $C^3$ developable ruled  surfaces that are generated by a directrix curve $\gamma$ and two director curves $v, \tilde v$. If $\ddot\gamma(t)\neq 0$ for all $t$ and $\gamma$ is a geodesic in both $S$ and $\tilde S$ then $v$ and $\tilde v$ must agree up to sign.
\label{thm:developable}
\end{thm}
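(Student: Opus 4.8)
The plan is to combine two classical facts: along a geodesic the surface normal agrees, up to sign, with the principal normal of the curve; and a ruled surface is developable exactly when its defining frame is everywhere coplanar. Since $\ddot\gamma(t)\neq 0$ for all $t$, the directrix has a well-defined Frenet frame $(T,n,b)$, with $T=\dot\gamma$, $n=\ddot\gamma/|\ddot\gamma|$, $b=T\times n$, positive curvature $\kappa=|\ddot\gamma|$, and torsion $\tau$; $C^3$-regularity is precisely what makes $\kappa$, $\tau$, $T$, $b$ continuous. Because $\gamma$ is a geodesic of $S$, its geodesic curvature in $S$ vanishes, so $\ddot\gamma$ has no component tangent to $S$; being moreover perpendicular to $T$, it follows that the unit normal of $S$ along $\gamma$ is $\pm n$. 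Hence the tangent plane of $S$ along $\gamma$ is the rectifying plane $\mathrm{span}\{T,b\}$, and in particular $v(t)=\cos\theta(t)\,T(t)+\sin\theta(t)\,b(t)$ for some function $\theta$; since $S$ is an immersed surface along $\gamma$ (a prerequisite for $\gamma$ to be one of its geodesics), $T\times v=-\sin\theta\,n$ cannot vanish, so $\sin\theta$ is nowhere zero. The same applies to $\tilde S$, giving $\tilde v=\cos\tilde\theta\,T+\sin\tilde\theta\,b$ with $\sin\tilde\theta$ nowhere zero.

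Next I would impose developability. For a $C^3$ ruled surface $\{\gamma(t)+sv(t)\}$ flatness is equivalent to $\det(\dot\gamma,v,\dot v)\equiv 0$. Differentiating $v=\cos\theta\,T+\sin\theta\,b$ with the Frenet relations $\dot T=\kappa n$, $\dot b=-\tau n$ gives $\dot v=-\dot\theta\sin\theta\,T+(\kappa\cos\theta-\tau\sin\theta)\,n+\dot\theta\cos\theta\,b$, and expanding the determinant in the orthonormal basis $(T,n,b)$ yields
\[
\det(\dot\gamma,v,\dot v)=-\sin\theta\,(\kappa\cos\theta-\tau\sin\theta).
\]
As $\sin\theta$ never vanishes, developability is thus equivalent to $\kappa\cos\theta=\tau\sin\theta$ for all $t$. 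Since $\kappa>0$, this together with $\cos^2\theta+\sin^2\theta=1$ determines $(\cos\theta,\sin\theta)$ up to an overall sign, hence
\[
v(t)=\pm\frac{\tau(t)\,T(t)+\kappa(t)\,b(t)}{\sqrt{\kappa(t)^2+\tau(t)^2}},
\]
a well-defined continuous unit vector field because $\kappa>0$. The identical computation for $\tilde S$ shows $\tilde v(t)$ equals the very same right-hand side, again up to sign.

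To finish I would fix the sign globally: the two continuous fields $\pm(\tau T+\kappa b)/\sqrt{\kappa^2+\tau^2}$ are nowhere equal, so on the connected parameter interval the continuous field $v$ coincides with one of them throughout, and likewise $\tilde v$; therefore $v$ and $\tilde v$ are each $\pm$ the same field, i.e.\ $v=\pm\tilde v$.

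The step I expect to be the main obstacle is handling the factorisation $\sin\theta(\kappa\cos\theta-\tau\sin\theta)=0$ uniformly in $t$: a priori one could, at some parameters, sit on the branch $\sin\theta=0$ (where $v=\pm T$, the tangent-developable situation) instead of on the rectifying branch, and this must be excluded for all $t$ at once — the regularity of $S$ along its geodesic is exactly what does this, since $\sin\theta=0$ would collapse the tangent plane. A secondary point to treat with care is the locus where $\tau(t)=0$ (planar stretches of $\gamma$): there the formula still makes sense and reduces to $v=\pm b$, and one should note that the continuity needed for the final global-sign argument is guaranteed precisely by the assumed $C^3$ regularity together with $\ddot\gamma\neq 0$.
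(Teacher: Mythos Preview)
Your argument is correct and follows the same skeleton as the paper's: the geodesic condition forces the surface normal along $\gamma$ to be the principal normal $n$, so $v$ lies in the rectifying plane $\mathrm{span}\{T,b\}$, and developability then pins down the angle of $v$ in that plane up to sign. Where the paper works with the second fundamental form to conclude $\langle\dot v,N\rangle=0$ and then introduces an ad hoc angle $\psi=\angle(\dot N,\dot\gamma)$ to solve for $\varphi$, you invoke the classical scalar criterion $\det(\dot\gamma,v,\dot v)=0$ and the Frenet equations directly; this is slightly more streamlined and has the bonus of producing the explicit closed form $v=\pm(\tau T+\kappa b)/\sqrt{\kappa^2+\tau^2}$, which identifies $v$ as (a unit multiple of) the Darboux vector of $\gamma$. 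Your handling of the two points you flagged---excluding $\sin\theta=0$ via regularity of the immersion along $\gamma$, and the global sign via continuity on a connected interval---is sound.
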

\begin{proof}
The surface  $S$ can be parameterized via $f(s,t) = \gamma(t) + s v(t)$.
From the fact that $\gamma$ is a geodesic, it follows that $\ddot\gamma$ is normal to $S$. Let $N \mathrel{:=} \ddot\gamma / |\ddot\gamma|$. Notice that $N(t)$ is normal to $S$ along the ruling $\{\gamma(t) + s v(t)\}$. The ruling direction is obviously a tangent direction of $S$, so that
	\begin{align}
		v &= \cos\varphi \cdot\dot\gamma + \sin\varphi \cdot(N\times\dot\gamma)
		\label{eq:v}
	\end{align}
	with an angle function $\varphi$.

The second fundamental form of $S$ is given by
	\begin{align*}
		\II &=
		\begin{pmatrix}
			\langle f_{tt},N\rangle & \langle f_{st},N\rangle \\
			\langle f_{ts},N\rangle & \langle f_{ss},N\rangle
		\end{pmatrix}\ .
	\end{align*}
	Developability implies that the determinant of $\II$ vanishes on an open and dense subset of $S$. Since $f_{ss}$ is zero, we conclude that the off-diagonal entries of $\II$ must also be zero. Hence, the vector $f_{st} = \dot v$ is perpendicular to the surface normal.

Taking derivatives of Equation~\eqref{eq:v} with respect to $t$ gives
	\begin{align*}
		\dot v = -&\dot\varphi\sin\varphi\cdot\dot\gamma + \cos\varphi\cdot\ddot\gamma + \\
		&\dot\varphi\cos\varphi\cdot(N\times\dot\gamma) + \\
		& \sin\varphi\cdot(\dot N\times\dot\gamma+N\times\ddot\gamma)\ .
	\end{align*}
	We have $N\times\ddot\gamma =0$ since $N$ is parallel to $\ddot\gamma$. Furthermore, the component of $\dot v$ along the surface normal must vanish:
	\begin{align*}
	   0 &= \cos\varphi\langle\ddot\gamma,N\rangle + \sin\varphi\langle\dot N\times\dot\gamma,N\rangle\ .
	\end{align*}
   To deal with the first term, notice that
  $ \langle\dot\gamma, N\rangle = 0$, which implies that we have $\langle\ddot\gamma,N\rangle + \langle\dot\gamma,\dot N\rangle = 0$. This turns the above equation into
   \begin{align*}
	   0 &= \cos\varphi\langle\dot\gamma,\dot N\rangle + \sin\varphi\langle \dot\gamma\times\dot N, N\rangle\ .
   \end{align*}
   Since $\ddot \gamma(t) \neq 0$, the angle $\psi \mathrel{:=} \angle(\dot N,\dot\gamma)$ can be obtained from $\gamma$ alone. Hence, the above equation turns into
   \begin{align*}
	   0 &= \cos\varphi\cos\psi + \sin\varphi\sin\psi \\
	   \Leftrightarrow 0 &= \cos(\varphi-\psi)\ .
   \end{align*}

   From the knowledge of $\psi$ we can infer $\varphi$, which, together with $N$, uniquely determines $v$.
\end{proof}

\subsection{Boundary facets}
Even though our origami cylinder only contains line segments as fold lines, it is not a priori clear that the cylinder's boundaries are also polygonal lines. In this section, we show that they must nevertheless be piecewise straight if the cylinder's embedded height is strictly smaller than the material height $h$. In any case, we have already shown that facets which are completely enclosed by fold lines (i.e., completely enclosed by straight line segments) must be flat. We now extend this result to those facets that meet the cylinder's boundary and prove Theorem~\ref{thm:flatness}.
\begin{proof}
  For a given origami cylinder whose embedded height is (strictly) less than the material height, consider a segment of either boundary that is at least $C^4$ (i.e., does not contain vertices). Let $\gamma$ and $\Gamma$ be the corresponding material and embedded curves of such a boundary segment, and assume, by contradiction, that $\Gamma$ is curved (i.e., its second derivative is nowhere zero---this can always be achieved by restriction to a part of the segment). Since $\Gamma$ consists entirely of parabolic points, each point carries a unique asymptotic line, whose collection is a ruled developable surface to which we apply the previous theorem.

By our definition of origami cylinders, $\Gamma$ lies in a horizontal plane and thus satisfies $\psi\equiv 0$, implying $\varphi\equiv\tfrac{\pi}{2}$, where $\psi$ and $\varphi$ refer to the angle functions in the proof of Theorem~\ref{thm:developable}. This and the fact that the curvature vector of $\Gamma$ is parallel to the $(xy)$-plane imply that the asymptotic lines emanating from $\Gamma$ are strictly vertical.
By property (i) above, these asymptotic lines cannot contain flat points and thus cannot meet fold lines transversally. Since the number of fold lines is finite, this implies that there exists an asymptotic line that traverses the cylinder from top to bottom. But then the cylinder's height equals the material height, contradicting the initial assumption.
\end{proof}

As a corollary of Theorem~\ref{thm:flatness}, we can conclude that a fold pattern must be embedded in such a way that there are no extra creases beyond those that originate from the fold lines:
\begin{corollary}
  \label{cor:no-extra-creases}
  Every fold pattern can only be embedded in such a way that its image consists entirely of planar pieces that must meet at the images of the fold lines.
\end{corollary}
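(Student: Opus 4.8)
The plan is to read the corollary as Theorem~\ref{thm:flatness} restated from the vantage point of the fold pattern rather than of one fixed origami cylinder, and to deduce it directly. Concretely, fix the given fold pattern and let $f$ be \emph{any} isometric embedding of it as an origami cylinder, with embedded height $H$ and material height $h$; as everywhere in this work we are in the regime $H<h$, whose necessity was noted just before the statement of Theorem~\ref{thm:flatness}. Theorem~\ref{thm:flatness} applies to this $f$ verbatim and yields precisely the assertion: the image of the domain under $f$ decomposes into planar faces, these faces meet along the images of the fold lines, and there are no creases other than those images. Since $f$ was arbitrary, every embedding of the pattern has this property, which is the content of the corollary. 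In short, the proof is immediate once Theorem~\ref{thm:flatness} is available; the only genuine task is to record that the conclusion holds uniformly over all admissible embeddings, which it does because Theorem~\ref{thm:flatness} was already phrased for an arbitrary origami cylinder.

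I would also devote a sentence to why the hypothesis $H<h$ cannot be dropped, so that the corollary is not over-read. The same length-and-isometry argument used elsewhere shows that when $H=h$ every vertical material line, having length $h$ and joining the two horizontal boundary planes (which are then at distance $h$), must map to a straight vertical unit-speed segment; hence $f(u,v)=(c(u),z_0+v)$ for some closed unit-speed planar curve $c$. Such a $c$ is forced to be affine only along the $u$-intervals swept out by non-vertical fold lines, and is otherwise free to be curved; thus for $H=h$ the image can contain genuinely curved pieces, the extreme case being the empty fold pattern embedded as a round cylinder. This is why the corollary, like Theorem~\ref{thm:flatness}, is to be understood for $H<h$ only.

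I do not expect any real obstacle here: the substantive geometric work was already carried out in proving Theorem~\ref{thm:flatness} and its ingredients (Lemma~\ref{lem:internalflatness} and Theorem~\ref{thm:developable}). The only points needing care are bookkeeping ones---phrasing the statement for every embedding of the pattern rather than a single cylinder, and flagging the $H<h$ restriction---neither of which requires new input.
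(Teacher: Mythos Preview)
Your proposal is correct and matches the paper's treatment: the paper offers no separate proof, simply introducing the statement with ``As a corollary of Theorem~\ref{thm:flatness}\ldots'', which is exactly the deduction you spell out. Your additional paragraph on why the restriction $H<h$ cannot be dropped is a helpful elaboration beyond what the paper provides.
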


\section{Construction of fans from given exterior vertices}
\subsection{A fold pattern where no short embedding can exist}
The next section will show how to complete short embeddings of fans (i.e., those where the distance between the exterior vertices' embedded positions is less than their material distance) into isometric ones by introducing additional fold lines to the fan.

However, it is not even clear that every fold pattern has a short embedding. Consider a single strip with three vertices at unit distance apart on each boundary, skewed into a truss-like shape. Of this fold pattern, we show that it is impossible to find an embedding whose height is half the material height which leaves the fold lines at the right length and compresses all boundary segments.

We first examine one of its downward fans (which are here actually only a single triangle) and its adjacent triangles. If the turning angle at the top vertex, $\alpha$, and the two boundary segments incident to the apex vertex are compressed to lengths $l_1,l_2\leq 1$, then one can find two possible embeddings each for the two bottom vertices. Let $e$ be the squared Euclidean distance between these embedded points. For each of the four choice pairs, $e$ is a function of $\alpha,l_1$, and $l_2$, where $0<\alpha<\pi$ and $0<l_1,l_2\leq1$. Then one can show that for $\tfrac{\pi}{5}\leq\alpha<\pi$, all four versions of $e$ are strictly positive for all values of $l_1,l_2$. Conversely, if the fan itself is embeddable, then $\alpha<\tfrac{\pi}{5}$.

If there existed a nondegenerate embedding, the top boundary would necessarily embed into a triangle. However, we have just shown that in this triangle, each turning angle is bounded by $\tfrac{\pi}{5}$, which is clearly impossible (the sum of all three turning angles needs to be equal to $2\pi$). Thus we have shown that the fold pattern described cannot be embedded in such a way that the embedding can be completed to an isometric embedding by the introduction of new folds.

However, there is a large class of strip patterns for which this is possible: those where one can pick three vertices each from top and bottom boundary such that the six vertices alternate between top and bottom boundary when sorted according to their horizontal position.

\subsection{Completing short embeddings to isometric embeddings}
  Contrary to the use case of finding an embedding for a given fold pattern, it is also useful to construct embedding and fold pattern at the same time in order to obtain special embeddings such as rotationally symmetric cylinders. This section will therefore provide a means of extending a basic ``skeleton'' fold pattern with a corresponding embedding to one that is additionally isometric by successive addition of fold lines. It turns out that this extension can already take place on the level of individual fans, with very little conditions on the skeleton pattern.

  Suppose that the apex vertices $a$ and $A$ and the exterior vertices  $p_1$, $p_n$ and $P_1$, $P_n$ of a fan are given in the material and spatial domain, respectively. Below we offer a construction for the interior vertices $p_2, p_3, \dots, p_{n-1}$ as well as their embedded locations $P_2, P_3, \dots, P_{n-1}$
  such that the resulting embedding is isometric. For this construction to work, the \emph{only} condition to be met is that $|P_nP_1| \leq |p_np_1|$ as shown by the following result.

  \begin{thm}
	\label{thm:refinement}
	Consider a triangle $\Delta(abc)$ in the plane and a triangle $\Delta(ABC)$ in $\mathbb{R}^3$. Let $E$ be any plane containing $B$ and $C$. If the following conditions are satisfied
	\begin{itemize}
	  \item $|AB| = |ab|$,
	  \item $|AC| = |ac|$,
	  \item $|BC| \leq |bc|$,
	  \item $H \leq h$, where $H$ is the distance from $A$ to the plane $E$ and $h$ is the distance from $a$ to $bc$,
	\end{itemize}
	then there exists a point $D$ on $E$ such that the triangle pair $\Delta(ABD) \cup \Delta(ADC)$ is isometric to $\Delta(abc)$.
  \end{thm}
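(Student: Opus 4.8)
The plan is to show that an admissible $D$ must lie on a fixed ellipse in $E$, that the failure to close up the two triangles restricts to an \emph{affine} function on that ellipse, and that this function changes sign precisely because $H\le h$. To begin, pass to a planar problem inside $E$: let $\bar A$ be the orthogonal projection of $A$ onto $E$, so $|A\bar A|=H$ and, by the Pythagorean theorem together with $|ab|,|ac|\ge h\ge H$, one has $|\bar A B|=\sqrt{|ab|^2-H^2}$ and $|\bar A C|=\sqrt{|ac|^2-H^2}$. Parametrise $bc$ by $d(t)=(1-t)\,b+t\,c$, $t\in[0,1]$. For $D\in E$ the pair $\Delta(ABD)\cup\Delta(ADC)$ is isometric to $\Delta(abc)$ exactly when there is a $t$ with $|BD|=t\,|bc|$, $|CD|=(1-t)\,|bc|$ and $|AD|=|a-d(t)|$; the first two conditions together say precisely that $D$ lies on the ellipse $\mathcal{E}\subset E$ with foci $B,C$ and major-axis length $|bc|$ (the degenerate case $|BC|=|bc|$ is dealt with below), with $t=|BD|/|bc|$, and then $\Delta(ABD)$ and $\Delta(ADC)$ automatically carry the side lengths of the planar triangles $\Delta(a\,b\,d(t))$ and $\Delta(a\,d(t)\,c)$ and are hinged along $AD$ exactly as those are hinged along $a\,d(t)$, so their union is isometric to $\Delta(abc)$. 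Thus it suffices to find $D\in\mathcal{E}$ with $\Psi(D):=|AD|^2-|a-d(t(D))|^2=|\bar A D|^2+H^2-|a-d(t(D))|^2=0$.

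The key observation is that $\Psi|_{\mathcal{E}}$ is affine. Take coordinates on $E$ with the midpoint of $BC$ at the origin and $BC$ along the first axis; write $\mathfrak{a}=|bc|/2$, $\mathfrak{c}=|BC|/2$, $\mathfrak{b}^2=\mathfrak{a}^2-\mathfrak{c}^2$, $e=\mathfrak{c}/\mathfrak{a}$. For $D=(\xi,\eta)\in\mathcal{E}$ one has the focal-radius identity $|BD|=\mathfrak{a}+e\xi$ and the ellipse equation $\eta^2=\mathfrak{b}^2-(1-e^2)\xi^2$; writing $|a-d(t(D))|^2=h^2+(|BD|-\delta)^2$, where $\delta$ is the signed position along $bc$ of the foot of the altitude from $a$ (so $\delta^2=|ab|^2-h^2$), and substituting, the $\xi^2$-terms cancel identically, so $\Psi|_{\mathcal{E}}=\ell|_{\mathcal{E}}$ for some affine $\ell(\xi,\eta)=p\xi+q\eta+s$. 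An affine function on an ellipse ranges over the closed interval $[\,s-\rho,\ s+\rho\,]$ with $\rho=\sqrt{p^2\mathfrak{a}^2+q^2\mathfrak{b}^2}$ (Cauchy--Schwarz, both endpoints attained), and hence vanishes somewhere on $\mathcal{E}$ if and only if $s^2\le\rho^2$. Computing $p,q,s$ from $|\bar A B|,|\bar A C|$ and the altitude relation $4h^2|bc|^2=16\,\mathrm{Area}(abc)^2$ yields $s=2\mathfrak{b}^2$ and $\rho^2=4\mathfrak{b}^2\bigl(\mathfrak{b}^2+h^2-H^2\bigr)$, so $s^2\le\rho^2$ reduces to $\mathfrak{b}^2\le\mathfrak{b}^2+h^2-H^2$, i.e.\ to $H\le h$. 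That is exactly the hypothesis, so a zero $D$ of $\Psi$ exists on $\mathcal{E}$, and by the first paragraph it does the job.

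The main obstacle is the bookkeeping in this last step: checking that the quadratic part of $\Psi$ genuinely cancels on $\mathcal{E}$ and extracting $s$ and $\rho^2$ correctly, which requires care with the sign conventions for $\delta$ and for the coordinates of $\bar A$ (the identity $p^2\mathfrak{a}^2+q^2\mathfrak{b}^2=4\mathfrak{b}^2(\mathfrak{b}^2+h^2-H^2)$ ultimately rests on Heron's formula for $h$). Two smaller points also need handling. If $|BC|=|bc|$ then $\mathfrak{b}=0$ and $\mathcal{E}$ collapses to the segment $BC$; but then $\Delta(ABC)\cong\Delta(abc)$ by SSS, and the image of any interior point of $bc$ under this congruence lies in $E\supset BC$ and may be taken as $D$. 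And one should observe that at a zero $D$ the parameter $t(D)=|BD|/|bc|$ lies in $(0,1)$ whenever $|BC|<|bc|$, so that the two spatial triangles are honestly non-degenerate and the gluing isometry above is well defined.
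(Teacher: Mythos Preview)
Your argument is correct and takes a genuinely different route from the paper's. Both proofs restrict attention to the same ellipse $\mathcal{E}\subset E$ with foci $B,C$ and focal-radius sum $|bc|$, but from there they diverge. The paper formulates the closing condition as the angle identity $\angle BDA+\angle ADC=\pi$, picks out two distinguished points $D^{\pm}$ on $\mathcal{E}$ (the intersections with the bisector of $\angle B\bar A C$), and runs an intermediate-value argument after verifying that the sign of $\sin(\angle BDA+\angle ADC)$ at $D^{\pm}$ does not depend on $H$; this last step is the delicate one. You instead encode the closing condition as $\Psi(D)=|AD|^2-|a\,d(t(D))|^2=0$ (equivalent to the angle condition via Stewart's theorem) and exploit the algebraic accident that the focal-radius formula and the ellipse equation together kill the quadratic part of $\Psi$, reducing the problem to whether an affine function $p\xi+q\eta+s$ has a zero on $\mathcal{E}$. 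The explicit computation $s=2\mathfrak{b}^2$, $\rho^2=p^2\mathfrak{a}^2+q^2\mathfrak{b}^2=4\mathfrak{b}^2(\mathfrak{b}^2+h^2-H^2)$ then makes the role of the hypothesis $H\le h$ completely transparent: it is \emph{exactly} $s^2\le\rho^2$. Your approach is more computational but also more self-contained---it avoids the angle-bisector construction and the sign-persistence analysis, and it shows as a byproduct that the hypothesis $H\le h$ is not only sufficient but necessary (when $\mathfrak{b}>0$) for the existence of $D$, which the paper's argument does not make explicit. The paper's approach, by contrast, yields a more geometric picture of where the solutions sit on $\mathcal{E}$ and feeds directly into Remark~\ref{rem:backwardsrays} about the number of solutions.
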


\begin{figure}[t]
  \centering
  \newcommand{\factor}{0.493}
  \begin{minipage}[t]{\factor\linewidth}
	\centering
	\includegraphics[width=\linewidth]{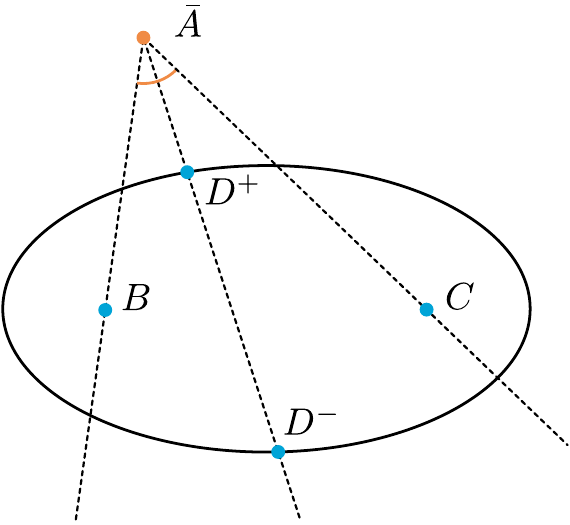}
	\caption{On the ellipse with foci $B$ and $C$, the points $D^+$ and $D^-$ lie on the intersection of the ellipse with the angular bisector of the angle $\angle B\bar{A}C$.}
	\label{fig:ellipse}
  \end{minipage}
  \hfill
  \begin{minipage}[t]{\factor\linewidth}
	\centering
	\includegraphics[width=\linewidth]{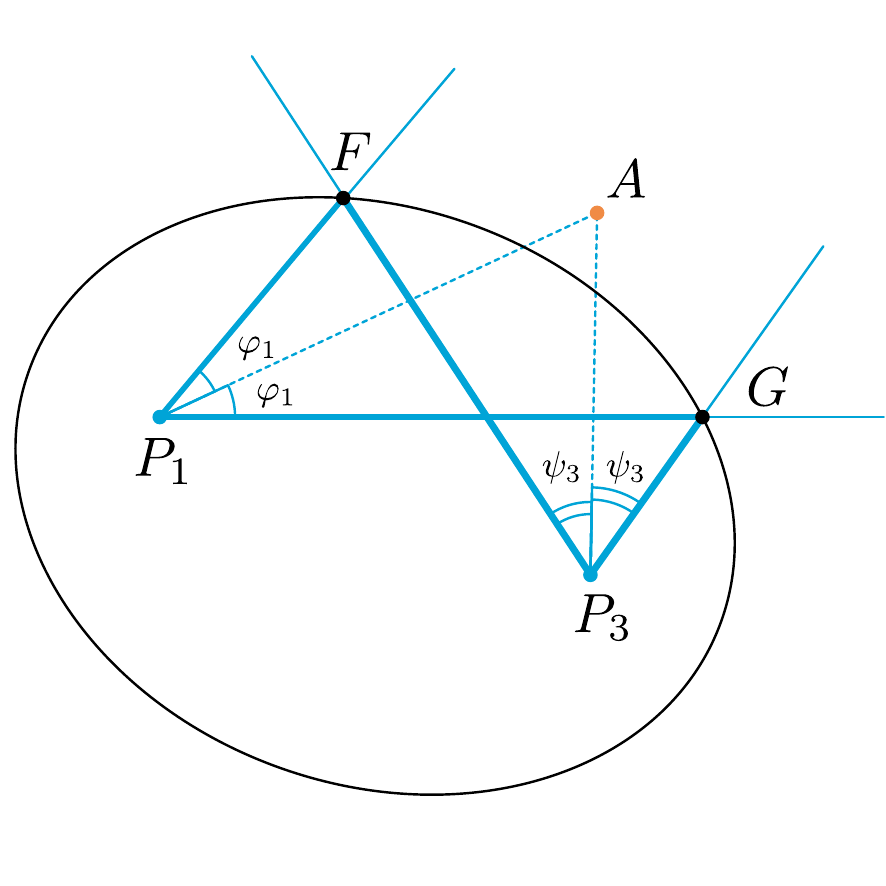}
	\caption{For $n=3$, the single interior vertex $p_2$ can be embedded at only two different locations $F$ and $G$, which are marked by the intersection of pairs of rays.}
	\label{fig:fanconstruction}
  \end{minipage}
\end{figure}

  \begin{proof}
	Consider the set of all points $D\in E$ with $|BD|+|DC| = |bc|$. This locus of points forms an ellipse $\eta$ in the plane $E$ with foci $B$ and $C$. It suffices to show that for some $D$ on this ellipse, $\angle BDA + \angle ADC = \pi$, for if this is the case, then the hinge composed of the two triangles $\Delta(ABD)$ and $\Delta(ADC)$ can be isometrically unfolded into a triangle congruent to $\Delta(abc)$.

	Let $\bar A$ be the orthogonal projection of $A$ onto $E$. First, $\bar A$ lies on or outside of the ellipse $\eta$:
	\begin{align*}
	  |\bar AB| &+ |\bar AC| = \\
	  &= \sqrt{|AB|^2-H^2}+\sqrt{|AC|^2-H^2}\\
	  &= \sqrt{|ab|^2-H^2}+\sqrt{|ac|^2-H^2}\\
	  &\geq \sqrt{|ab|^2-h^2} + \sqrt{|ac|^2-h^2}\\
	  &\geq |bc|,
	\end{align*}
	precisely the condition that $\bar A$ does not lie inside $\eta$.

	Now consider the angle bisector of the angle $\angle B\bar AC$, and denote its two intersection points with $\eta$ by $D^+$ and $D^-$, where $D^+$ is closer to $\bar A$, see Figure~\ref{fig:ellipse}. Then
	\begin{align*}
	  \angle BD^+A + \angle AD^+C \geq \pi\ ,\\
	  \angle BD^-A + \angle AD^-C \leq \pi\ .
	\end{align*}
	In order to prove, e.g., the first line, express the cosine of the two angles via the inner product of the vectors $(B-D^+)$, $(C-D^+)$, and $(A-D^+)$:
	\begin{align*}
	  \cos\angle BD^+A &=\frac{\langle B-D^+,A-D^+\rangle}{|B-D^+|}\frac{1}{|A-D^+|}\ ,\\
	  \cos\angle AD^+C &=\frac{\langle C-D^+,A-D^+\rangle}{|C-D^+|}\frac{1}{|A-D^+|}\ .
	\end{align*}
	Since $(B-D^+)$ and $(C-D^+)$ are horizontal, the only quantity that depends on $H$ is the length $|A-D^+|$. We leverage this fact to conclude that the cosines' sign is independent of $H$. For the sake of brevity, write
	\begin{align*}
	  \cos\angle BD^+A &=\frac{c_1}{\sqrt{c_3^2+H^2}}\ ,\\
	  \cos\angle AD^+C &=\frac{c_2}{\sqrt{c_3^2+H^2}}\ .
	\end{align*}

	Proving that $ \angle BD^+A + \angle AD^+C \geq \pi$ is equivalent to proving that
	\begin{align*}
	  \sin\left(  \angle BD^+A + \angle AD^+C \right)\leq 0
	\end{align*}
	since $\angle BD^+A + \angle AD^+C <2\pi$ by construction. Since each individual angle is bounded above by $\pi$, we use $\sin(x)=\sqrt{1-\cos(x)^2}$ to obtain
	\begin{align}
	  &\sin\left(  \angle BD^+A + \angle AD^+C \right) = \nonumber\\
	  &\frac{\left(  c_2\sqrt{c_3^2-c_1^2+H^2}+ c_1\sqrt{c_3^2-c_2^2+H^2} \right)}{c_3^2+H^2}\ .
	  \label{eq:sine}
	\end{align}
	To show that this expression is nonpositive, we show that it either vanishes for all $H$ or that it is negative for $H= 0$ and cannot vanish for any other $H$ in this case. Continuity then implies the desired conclusion.

	It is straightforward to verify (using the Cauchy-Schwarz inequality) that $c_3^2\geq c_1^2$ and  $c_3^2\geq c_2^2$. Hence, if the expression in~\eqref{eq:sine} is zero for some $H_0$, then either both $c_1$ and $c_2$ are zero or both are nonzero with different sign. If both are zero, then both angles $\angle BD^+A$ and $\angle AD^+C$ are equal to $\pi/2$ regardless of $H$ and the expression in~\eqref{eq:sine} vanishes for all $H$. If both are nonzero, rearranging terms in~\eqref{eq:sine} yields $c_1 + c_2 =0$, which is equivalent to $\angle BD^+A = \pi - \angle AD^+C$. This holds independently of $H$, so again the expression in~\eqref{eq:sine} vanishes for all $H$.

	Hence, the expression in~\eqref{eq:sine} either vanishes for all $H$ or does not vanish for any $H$. In the former case, there is nothing to show. Consider the latter case and let $H=0$. Then $A$ and $\bar A$ agree and $D^+$ lies in the inside or on the boundary of the triangle $\Delta(ABC)$, which necessitates $ \angle BD^+A + \angle AD^+C \geq\pi$ with equality if and only if the expression in~\eqref{eq:sine} vanishes, which we have ruled out by assumption.

	In order to finish the proof of Theorem~\ref{thm:refinement}, notice that $\angle BDA + \angle ADC$ is continuous as $D$ sweeps along an arc of the ellipse between $D^-$ and $D^+$; hence, there must exist a $D$ with $\angle BDA + \angle ADC = \pi.$
  \end{proof}
  \begin{rem}
	\label{rem:backwardsrays}
	There are at least two distinct such points $D \in E$ unless the ellipse is degenerate or $\bar A$ lies on $\eta$, which occurs when one of the inequalities in the theorem statement becomes tight: $|BC| = |bc|$ or $H=h$. In Figure~\ref{fig:fanconstruction}, the two solutions are the points $F$ and $G$ (see below for an explanation).
  \end{rem}

  The previous theorem also shows that as soon as $A$, $P_1$, and $P_n$ have been placed in such a way that $|P_1P_n|\leq |p_1p_n|$, it is always sufficient to introduce a single interior fold to obtain a valid isometric embedding of the material triangle $AP_1P_n$. This, together with Lemma~\ref{lem:angle}, provides a method of constructing fans with an arbitrary (but still finite) number of interior folds.

  Recall that from given $P_1$, the possible locations of $P_2$ are located on two rays originating from $P_1$. Where previously $P_2$ could in principle lie anywhere on that ray, now not only is $p_2$ no longer given, but the final point $P_n$ -- which was previously unconstrained -- is now prescribed. In other words, we need to consider the ellipse $\eta$ given by the set $\{P \ : \ |P_1P|+|PP_n| = |p_1p_n|\}$ and place $P_2$ at any point on the chosen ray inside $\eta$, also determining $p_2$ in the process. As an illustration, $P_2$ can be placed at any point that lies on one of the bold line segments $P_1F$ and $P_1G$ in Figure~\ref{fig:fanconstruction}. Notice that by assumption $|P_1P_n|~\leq~|p_1 p_n|$, which implies that
  \begin{align*}
  |P_2P_n| &\leq |p_1 p_n| - |P_1 P_2|\\
  &= |p_1 p_n| - |p_1 p_2|\\
  &= |p_2p_n| \ .
  \end{align*}

  Consequently, the conditions of Theorem~\ref{thm:refinement} are satisfied for all $i$ with $A=A$, $B=P_i$, $C=P_n$, $a=a$, $b=p_i$, and $c=p_n$ and the construction that led to $P_2$ can now be repeated for $P_3$ and so forth.

The statement of Theorem~\ref{thm:refinement} then has two consequences: (a) If we place $P_{i}$ \emph{on} the attendant ellipse during our construction, then $P_{i}$ corresponds to one of the points $D$ from the theorem and all subsequent vertices $P_{i+1}, \dots, P_n$ lie on a straight line, and (b) if, on the other hand, we place $P_{2}, P_3, \dots, P_{n-2}$ strictly \emph{inside} the attendant ellipses, then the existence of the point $D$ of the theorem guarantees that we can always construct the penultimate vertex $P_{n-1}$ to obtain an isometric embedding.

\begin{rem}
  Similarly, one can consider a more general (and more symmetric) variant of the above construction, by first placing the vertices $P_2, P_3, \dots, P_i$, then $P_{n-1}, P_{n-2}, \dots, P_{n-k}$, then placing $P_{i+1}, P_{i+2}, \dots, P_{i+j}$, and so forth, i.e., by alternating the placement of new vertices between left and right, until all vertices have been placed.
\end{rem}
\end{document}